\documentclass[12pt]{article}

\usepackage{amsmath,amsfonts,amsthm,amssymb}
\usepackage{setspace}
\usepackage{fancyhdr}
\usepackage{lastpage}
\usepackage{extramarks}

\usepackage[moderate]{savetrees}

\usepackage{chngpage}
\usepackage{soul,color}
\usepackage{graphicx,float,wrapfig}
\usepackage{listings}
\usepackage{tikz}
\usepackage{harpoon}
\usepackage[ruled]{algorithm2e}
\usepackage{hyperref}
\usepackage{bm}
\usepackage{indentfirst}
\usepackage{natbib}
\usepackage{verbatim}
\usepackage{pgfplots}
\pgfplotsset{compat=1.18}

\hypersetup{
    colorlinks=true,
    citecolor=blue,
    linkcolor=blue,
    urlcolor=blue,
    bookmarksopen=true,
    pdfstartview={XYZ null null 1.00},
    pdfpagelayout={SinglePage}
}

\newtheorem{theorem}{Theorem}
\newtheorem{lemma}{Lemma}
\newtheorem{proposition}{Proposition}
\newtheorem{corollary}{Corollary}
\newtheorem{definition}{Definition}

\usetikzlibrary{arrows,graphs}
\newcommand{\enabstractname}{Abstract}

\newenvironment{enabstract}{%
  \noindent\mbox{}\hfill{\bfseries \enabstractname}\hfill\mbox{}\par
  \vskip 2.5ex}{\par\vskip 2.5ex}

\graphicspath{{figure/}}                                 
\usepackage[a4paper]{geometry}                           
\begin{document}
\title{On the Complexity of Bayesian Signal Processing}
\date{August 31, 2026}
\author{Yi Liu\thanks{\baselineskip=0.7\normalbaselineskip
	Department of Economics, Yale University. {\tt yi.liu.yl2859@yale.edu}}}
\maketitle
\begin{enabstract}
We develop a computational framework for Bayesian decision-making. We show that as long as no action is optimal in every state, Bayes-optimal choice is intractable. This hardness need not arise from large action, state, or signal spaces, nor from a complicated represented utility function: extracting enough information from a hard-to-interpret signal to act optimally can itself be computationally hard. We also characterize tractability across approximation notions and identify their sources of difficulty. Under the probably approximately correct criterion, sample-based Bayesian learning is tractable if and only if the signal support is bounded. Our results provide justifications for bounded rationality, costly Bayesian inference, and sample-based Bayesian learning.

\vspace{1ex}
\noindent \textbf{Keywords:} bounded rationality, Bayesian decision theory, computational complexity, Bayesian learning
\end{enabstract}

\newpage
\section{Introduction}

Inference through Bayes' rule is a standard assumption in Bayesian decision theory. This assumption provides a foundational motivation for the Blackwell-experiment framework, under which every signal-generating process can, without loss of generality, be represented in likelihood-table form. However, implementing this assumption requires decision makers to have infinite computational power. This leaves a natural question: Can a computationally bounded decision maker act optimally, as Bayesian decision theory predicts?

In real-world signal-processing problems, the decision maker typically knows how signals are generated but is not given the corresponding likelihood table. An investor may understand how a firm's underlying condition, accounting records, managerial reporting choices, and auditing procedure produce its final disclosure; a regulator may understand how a drug's underlying safety profile, trial design, patient outcomes, and laboratory procedures produce a safety-test result; and a user may learn from a platform's disclosure how profile data, item attributes, and its algorithm produce a recommendation. In each case, the decision maker observes the realized output and understands how it is produced without being given its state-contingent likelihoods. Therefore, signal processing is essentially the computational problem of transforming knowledge of the signal-generating process and the realized signal into a decision. Knowing how a signal is generated does not itself provide the likelihood table required for Bayesian choice.

We analyze signal processing by reframing Bayesian decision theory within a computational framework. To isolate the difficulty arising from signal processing, we fix the decision environment: the state space, action space, and utility function are all fixed, with utility determined by the unobserved state and the chosen action. The signal-generating process is represented by a randomized sampler circuit that takes the state and a random tape as inputs and produces a binary-string signal. The decision maker's choice rule is modeled as a selector algorithm that takes the description of the sampler circuit and the realized signal as inputs and returns an action as output. We adopt the classical view in computer science that a selector is tractable if and only if it always runs in time polynomial in the size of its input. Our analysis covers selectors that always output a Bayes-optimal choice as well as selectors that output an approximately optimal choice. Our goal is to determine which notions of optimality--exact Bayes optimality or approximate optimality--admit a tractable selector and to identify where the hardness comes from. 

Our first result (\autoref{thm:exact-bayes-dichotomy} and \autoref{cor:hard-exact-small-signal}) establishes a sharp dichotomy for exact Bayes-optimal choice. Under a standard complexity assumption, Bayes-optimal choice is tractable for every signal-generating process if and only if one action is optimal in every state. The difficulty comes from two requirements: a posterior can lie arbitrarily close to an action boundary, creating an extreme precision requirement for comparing implicitly represented likelihoods, and the selector must succeed after every supported signal. This hardness already appears with only two supported signals, so it need not be driven by a large signal space.

\autoref{thm:signalwise-approximation-threshold} and \autoref{cor:hard-signalwise-small-signal} further show that relaxing exactness does not gradually restore tractability when performance must still be guaranteed after every supported signal. There is a sharp threshold: an efficient deterministic selector exists only for guarantees that can already be attained by ignoring the signal and always choosing the same action. Any improvement over this signal-independent benchmark, however small, is intractable under a standard complexity assumption. Since this result also holds with only two supported signals, the universal quantifier--the requirement that the selector succeed after \emph{every} supported signal--is itself a source of difficulty, and the hardness is not necessarily caused by the size of the signal space.

In \autoref{thm:ex-ante-approximation-threshold} and \autoref{thm:pac-ex-ante-threshold}, we relax the guarantee further by requiring good approximate performance only over a high-probability mass of signals, without requiring the guarantee to hold for every supported signal. Even so, under a standard complexity assumption, neither an efficient deterministic selector nor an efficient randomized probably approximately correct  (PAC) selector can uniformly improve on the no-information benchmark over unrestricted signal-generating processes. However, \autoref{prop:finite-support-pac-upper} implies that the hard instances rely \emph{solely} on possibly exponentially large signal support. With a known polynomial bound on support, a Monte Carlo empirical-Bayes selector efficiently achieves any fixed nonexact approximation with any fixed confidence level (\autoref{prop:finite-support-pac-upper}). Thus, for PAC ex ante performance, the potentially exponential size of the signal support is the only remaining source of computational difficulty. 

In summary, we show that Bayesian signal processing is essentially a computationally hard problem. We identify the universal quantifier as the main source of this hardness: it requires good performance after every signal, including signals that may be difficult to interpret. The resulting difficulty includes, but is not limited to, the burden created by a potentially large signal space. We also provide a justification for using sample-based Bayesian learning in economics, computer science, and statistics. This approach appears to be the way to solve the weaker approximation problem when the signal support is bounded--the only problem in our framework that may admit a tractable solution.

\paragraph{Related Literature.}

Our work relates to a broad literature in economics on bounded rationality and the computational limits of choice and information processing. \citet{Camara2022} studies rational and computationally tractable choice over high-dimensional product menus, where hardness comes from the large, input-dependent action space and additive separability restores tractability. By contrast, here the finite action space and utility matrix are fixed; the difficulty lies instead in exact inference from a succinct experiment and possibly large signal space. \citet{Wilson2014} models bounded rationality in Bayesian information processing by imposing a finite-memory constraint. By contrast, we impose no exogenous memory bound; instead, we ask whether the requisite inference from a sampler representation can be carried out in an asymptotic time limit. \citet{HazlaEtAl2021} establish hardness of Bayesian decisions in social-learning networks under fixed natural utilities. Their hardness comes from the exponentially large joint private signal space created by neighbors in the network. \citet{EcheniqueGolovinWierman2011} find that the hardness of the consumer's problem can arise from a complex representation of the utility function. By contrast, our hardness result admits a clear economic interpretation and is independent of how utility is revealed. \citet{Eboli2003} studies how the computational cost of Bayesian updating depends on the representation of information. \citet{Grether1980} finds experimental evidence of base-rate neglect, especially among inexperienced or weakly incentivized subjects. 

Our work also relates to research in machine learning and statistics on learning Bayes-optimal decision rules. \citet{Haussler1992} develops a decision-theoretic PAC-learning framework for learning bounded-loss decision rules from labeled samples when the data distribution is unknown; \citet{Elkan2001} and \citet{AvilaPiresEtAl2013} study cost-sensitive classification, while \citet{AudibertTsybakov2007} analyzes plug-in classifiers that estimate conditional class probabilities before applying the Bayes rule.  \citet{LeEtAl2017} amortize posterior inference, \citet{SainsburyDaleEtAl2024} train neural Bayes estimators, and \citet{PudloEtAl2016} formulate approximate Bayesian computation model choice as a classification problem, training a random forest on model-labeled data generated from the prior predictive distribution.  \citet{OrtizKaelbling2000} use sampling to select near-optimal actions in one-decision influence diagrams, while \citet{AlsingEtAl2023}, \citet{GoreckiEtAl2024}, and \citet{PolsonEtAl2024} learn expected-utility surfaces, posterior expected costs, or posterior distributional utilities that support action choice.  These approaches are generally model-specific and sample-based, emphasizing statistical risk, amortization, or simulation efficiency. Thus, these studies develop model-specific, simulation-based methods for directly learning Bayes actions, which inspires our construction of the empirical Bayes selector. By contrast, we treat a sampler description as part of the input and ask whether one polynomial-time selector can implement or approximate the Bayes optimal choice uniformly over all such samplers in a fixed finite decision environment.  Our contribution is the resulting fixed-environment classification of exact and approximate tractability, together with finite-support sample bounds under an explicit sample-only access model.

\section{Model}

A computationally bounded decision maker faces a finite decision making problem where the state $\theta\in\Theta$ is unknown to the decision maker who will choose an action $a\in A$. The state is drawn according to a full-support prior $\mu_0\in\Delta(\Theta)$ where every component is rational, and the decision maker's utility function is specified by
\[
u:A\times\Theta\longrightarrow\mathbb{Q}_{>0}.
\]
These objects constitute the decision environment and are fixed.  

Before choosing an action, the decision maker receives a signal from a sampler-described experiment. A sampler $q$ is represented by a finite randomized Boolean circuit that, given input $\theta$ and a uniformly random seed $R$, outputs a signal $s = q(\theta;R)$. Therefore, it induces the Blackwell experiment 
\[
q(s|\theta) = \Pr_{R}(q(\theta;R) = s).
\]
By a slight abuse of notation, we use \(q\) for both the sampler circuit and the experiment it induces. Fix a standard gate-list encoding of sampler descriptions and write $\langle q\rangle$ for the bit string encoding $q$. Appendix~\ref{appendix:computational-preliminaries} provides an informal guide to circuits, gate-list encodings, and the complexity classes used below. For any binary string $x$, let $|x|$ denote its number of bits.  Thus $|\langle q\rangle|$ is the description length of the sampler, whereas $|s|$ is the length of the realized signal string.  The size of an instance is
\[
|\langle q\rangle|+|s|.
\]

We model the decision maker's choice rule as a \emph{selector} $\alpha(\langle q\rangle,s)$. A selector is a function that maps $(\langle q\rangle,s)$ to an action $\alpha(\langle q\rangle,s)\in A$. Because the decision environment $(\Theta,A,\mu_0,u)$ is fixed, the sampler description $\langle q\rangle$ and the realized signal $s$ are the selector's only varying arguments.

For a signal $s$, let
\[
Z(q,s)=\sum_{\theta\in\Theta}\mu_0(\theta)q(s\mid\theta).
\]
Posterior statements below are restricted to $Z(q,s)>0$.  Define the unnormalized posterior payoff
\[
W_a(q,s)=\sum_{\theta\in\Theta}
\mu_0(\theta)q(s\mid\theta)u(a,\theta).
\]
When $Z(q,s)>0$, the posterior expected payoff is $W_a(q,s)/Z(q,s)$, so its maximizers are exactly the maximizers of $W_a(q,s)$.

We call a selector $\alpha$ an \emph{exact Bayes selector} if it satisfies
\[
\alpha(\langle q\rangle,s)\in\arg\max_{a\in A}W_a(q,s)
\]
on every input. 

A selector $\alpha$ is a \emph{deterministic polynomial-time selector} if there exists a  deterministic algorithm that, on every input $(\langle q\rangle,s)$, outputs $\alpha(\langle q\rangle,s)$ in time polynomial in $|\langle q\rangle|+|s|$. We call such a selector \emph{tractable}. Unless explicitly stated as randomized, we also use \emph{polynomial-time selector} to refer to this deterministic notion.

\section{Exact Bayesian Selectors}

Define the set of dominant actions
\[
D(u)=\left\{d\in A:
u(d,\theta)\geq u(a,\theta)
\text{ for every }a\in A,\ \theta\in\Theta\right\}.
\]

\begin{theorem}
\label{thm:exact-bayes-dichotomy}
If $\mathsf{P}\neq\mathsf{PP}$, there is a polynomial-time exact Bayes selector for all sampler descriptions $q$ and signals $s$ if and only if $D(u)\neq\emptyset$.
\end{theorem}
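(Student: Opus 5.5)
The "if" direction is immediate. If some $d\in D(u)$ exists, the constant selector $\alpha(\langle q\rangle,s)=d$ runs in constant time. It is exact because, for every $a$ and every signal with $Z(q,s)>0$,
\[
W_d(q,s)-W_a(q,s)=\sum_\theta \mu_0(\theta)q(s\mid\theta)\bigl(u(d,\theta)-u(a,\theta)\bigr)\ge 0 .
\]
This direction uses neither the complexity assumption nor any structure of $q$.

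For the "only if" direction, I would argue by contraposition. Assume $D(u)=\emptyset$ and that a polynomial-time exact Bayes selector $\alpha$ exists. The goal is a polynomial-time algorithm for a $\mathsf{PP}$-complete problem, which gives $\mathsf{P}=\mathsf{PP}$. The natural target is MAJSAT, or the equivalent threshold form: given a Boolean formula $\varphi$ on $n$ variables and an integer $k$, decide whether $\#\varphi\ge k$.

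\emph{Step 1: extract a two-state, two-action conflict.} Since $D(u)=\emptyset$, no action is optimal at every state. Hence there are states $\theta_1\ne\theta_2$ and actions $a_1\ne a_2$ with $a_i$ uniquely or weakly optimal at $\theta_i$, and $a_1$ not optimal at $\theta_2$. I would restrict attention to samplers that emit the target signal $s^*$ only in states $\theta_1$ and $\theta_2$; every other state maps to a different dummy signal. At $s^*$ the posterior is then supported on $\{\theta_1,\theta_2\}$, and along this segment the Bayes correspondence is piecewise constant. There are finitely many breakpoints, since $A$ is finite, and each is a rational likelihood ratio determined by the fixed $u$ and $\mu_0$. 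As $\lambda=q(s^*\mid\theta_1)/q(s^*\mid\theta_2)$ runs from $0$ to $\infty$, the optimal set changes at least once. I pick a breakpoint $\lambda^*=p/r$ together with two actions $b,b'$ such that:
\begin{itemize}
\item for $\lambda$ slightly above $\lambda^*$, every maximizer lies in a set $S_+$;
\item for $\lambda$ slightly below $\lambda^*$, every maximizer lies in a set $S_-$;
\item $S_+$ and $S_-$ are disjoint.
\end{itemize}
Disjointness holds because strictly optimal sets on adjacent open intervals differ. Otherwise a single action would be optimal on the whole segment, including both endpoints $\theta_1$ and $\theta_2$. Dominance must still be handled for the remaining states, so I would choose $\theta_1,\theta_2$ via a pair of states whose optimal sets are disjoint. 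Such a pair exists because otherwise, and this needs care, some action would lie in every optimal set. Care is also needed for ties exactly at $\lambda^*$; I would avoid them by perturbing the count comparison by $1/2$, for instance by comparing $2\#\varphi$ with $2k-1$.

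\emph{Step 2: the reduction.} Build a sampler in polynomial time from $\varphi$, $k$, and the constants $p,r$. In state $\theta_1$, it samples an assignment $x$ uniformly and outputs $s^*$ with probability proportional to $r\cdot[\varphi(x)]$, using extra random bits to implement the rational factor. In state $\theta_2$, it outputs $s^*$ with probability proportional to $p\cdot k'/2^n$, where $k'$ is the shifted threshold; this is implemented by comparing a random $n$-bit number with $k'$. Then $\lambda>\lambda^*$ holds if and only if $\#\varphi$ exceeds the threshold, and we get strict inequality in both cases. To decide the instance, run $\alpha$ on $(\langle q\rangle,s^*)$ and check whether the output lies in $S_+$. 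We must also ensure $Z>0$, for example by mixing in a tiny common baseline. The baseline moves $\lambda$ only by a controlled amount, and that can be absorbed by scaling with $2^{-n-c}$.

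\emph{Main obstacle.} I expect the main obstacle to be exactly the precision and tie-breaking bookkeeping. Rational breakpoints must be realized by circuits with uniform seeds, so the constants $p,r$ have to be built with bounded-size gadgets. The perturbation must keep the gap strictly on one side for every count. Finally, the choice of the conflict pair under $D(u)=\emptyset$ must guarantee that $S_+\cap S_-=\emptyset$.
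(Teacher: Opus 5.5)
Your ``if'' direction is correct, but the ``only if'' direction has a genuine gap exactly where you wrote ``this needs care.'' When $|\Theta|\geq 3$, $D(u)=\emptyset$ does \emph{not} imply that some two states have disjoint optimal sets. A family of sets can have empty total intersection while every pair intersects.

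Take $\Theta=\{\theta_1,\theta_2,\theta_3\}$, $A=\{a,b,c\}$, $u(a,\cdot)=(2,1,2)$, $u(b,\cdot)=(2,2,1)$, $u(c,\cdot)=(1,2,2)$. The optimal sets are $\{a,b\}$, $\{b,c\}$, $\{a,c\}$, so $D(u)=\emptyset$, yet every pair meets. Worse, consider any posterior $p$ supported exactly on $\{\theta_1,\theta_2\}$. Action $b$ earns $2$, while $a$ and $c$ earn $2-p_{\theta_2}<2$ and $2-p_{\theta_1}<2$. So $b$ is the \emph{unique} Bayes action on that whole open edge, and symmetrically $c$ and $a$ are on the other two edges. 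Every exact selector is therefore constant on all queries whose target-signal posterior is confined to two states, and your reduction extracts no information. Your own reasoning shows that an edge has a usable interior breakpoint precisely when its two endpoint optimal sets are disjoint. That condition is equivalent to $D(u)=\emptyset$ only when $|\Theta|=2$.

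The missing idea is to move off the edges to full-support posteriors, as the paper does. If one payoff vector were optimal at every full-support belief, continuity would make it optimal at every vertex, i.e.\ dominant. Hence $D(u)=\emptyset$ yields rational full-support beliefs $\mu_a,\mu_b$ at which distinct payoff vectors are uniquely optimal. After a small generic rational perturbation, the segment joining them crosses the indifference hyperplanes one at a time. Choose a crossing $p^0$ and rational points $p^\pm=p^0\pm\varepsilon(\mu_b-\mu_a)$ on either side, with distinct uniquely optimal payoff vectors.

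Then set the likelihood of $s^\star$ in state $\theta$ to $(1-z)e^-_\theta+z\,e^+_\theta$. Here $e^\pm_\theta$ is $p^\pm_\theta/\mu_0(\theta)$ times one common scale that places it in $(0,1]$, and $z$ is the acceptance probability of the counting circuit. Bayes' rule then gives the posterior exactly $(1-z)p^-+zp^+$, so the side of $p^0$ is decided by whether $z>1/2$.

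Your tie-avoiding shift (comparing $2\#\varphi$ with $2k-1$), your fair-coin implementation of rational constants, and your payoff-class decoder ($S_+$ versus $S_-$) all carry over. As a minor point, the constants in your Step 2 are swapped. With weight $r$ on $\theta_1$ and $p$ on $\theta_2$, the test $\lambda>p/r$ becomes $r^2\#\varphi>p^2k'$ rather than $\#\varphi>k'$.
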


\begin{proof}[Proof Overview]
If $D(u)\neq\emptyset$, the claim is immediate: a selector can always return a dominant action, independently of the sampler and the observed signal, and hence runs in constant time. When $D(u)=\emptyset$, we reduce from \textsc{Majsat} (Majority-SAT), a canonical $\mathsf{PP}$-complete problem \citep{Gill1977}. The reduction uses a sampler with only two possible signals and translates the acceptance probability of a Boolean circuit into the location of a posterior along a line segment. We sketch the main idea here and leave the technical details to Appendix~\ref{appendix:proof}.

Because no action is dominant, one can find two nearby posterior beliefs $p^-$ and $p^+$ on opposite sides of a decision boundary. Let $p^0=(p^-+p^+)/2$. Along the segment joining $p^-$ and $p^+$, the uniquely optimal payoff vector is that of some action $a^-$ on the $p^-$-side of $p^0$, and that of another action $a^+$ on the $p^+$-side. Thus, an exact selector evaluated at a posterior on this segment reveals on which side of $p^0$ that posterior lies.

Now take a \textsc{Majsat} instance $\varphi$. We can determine whether the acceptance probability of $\varphi$ is greater than $1/2$ or not. For each state $\theta$, let $E_\theta^-$ and $E_\theta^+$ be auxiliary random events calibrated so that they implement the posteriors $p^-$ and $p^+$, respectively, upon observing a target signal $s^\star$. Using independent random seeds $R$ and $U$ for $\varphi$ and these auxiliary events, define the two-signal sampler
\[
q_\varphi(\theta;R,U)=
\begin{cases}
s^\star,
& \bigl(\varphi(R)=0\ \wedge\ E_\theta^-(U)\bigr)
  \ \vee\ \bigl(\varphi(R)=1\ \wedge\ E_\theta^+(U)\bigr),\\
s^\circ, & \text{otherwise}.
\end{cases}
\]
By Bayes' rule,
\[
\Pr(\cdot\mid s^\star)
=(1-z_\varphi)p^-+z_\varphi p^+.
\]
Therefore, the posterior lies on the $p^+$-side of $p^0$ exactly for yes-instances of \textsc{Majsat}, and on the $p^-$-side otherwise. A fixed decoder applied to the output of any exact Bayes selector consequently determines whether $\varphi\in\textsc{Majsat}$. 
\end{proof}

\autoref{thm:exact-bayes-dichotomy} reveals two features that jointly generate this computational difficulty. First, exact optimality can require extreme precision. Whenever acting optimally requires some information about the state--that is, whenever no single action is optimal in every state--an experiment can place the posterior arbitrarily close to a decision boundary, so that even a minute difference in signal likelihoods changes the optimal action. Second, the guarantee is uniform over signals: an exact Bayes selector must choose optimally after \emph{every} supported signal and hence cannot ignore even a single signal whose interpretation is computationally difficult. Thus, even though the underlying decision problem has only finitely many states and actions, exact Bayesian choice may require \emph{arbitrarily fine} likelihood distinctions to be resolved after \emph{every} possible observation. To further explore the driven force of the hardness, the same argument of the proof yields the following corollary.
\begin{corollary}\label{cor:hard-exact-small-signal}
Suppose $D(u)=\emptyset$ and $\mathsf{P}\neq\mathsf{PP}$. There is no polynomial-time exact Bayes selector even for the restricted class of samplers $q$ satisfying
\[
\left|\{s:Z(q,s)>0\}\right|\leq 2.
\]
\end{corollary}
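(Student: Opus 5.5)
The plan is to reuse the reduction from the proof of \autoref{thm:exact-bayes-dichotomy} and check that it already lands in the restricted class. Given a \textsc{Majsat} instance $\varphi$, the sampler $q_\varphi$ built there outputs only the two strings $s^\star$ and $s^\circ$, so $\{s:Z(q_\varphi,s)>0\}\subseteq\{s^\star,s^\circ\}$ and the support bound $\le 2$ holds automatically. Suppose, toward a contradiction, that a polynomial-time exact Bayes selector $\alpha$ exists for the restricted class. On input $(\langle q_\varphi\rangle,s^\star)$ it returns an action maximizing $W_a(q_\varphi,s^\star)$. Applying the fixed decoder (output ``yes'' iff the action is $a^+$) then decides \textsc{Majsat} in polynomial time. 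Since \textsc{Majsat} is $\mathsf{PP}$-complete, this gives $\mathsf{P}=\mathsf{PP}$, contradicting the hypothesis.

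The key steps, in order, are as follows.
\begin{enumerate}
\item Since $D(u)=\emptyset$, fix rational posteriors $p^-,p^+$ with $p^0=(p^-+p^+)/2$. On the $p^-$-side of $p^0$ along the segment, $a^-$ is the unique maximizer, and on the $p^+$-side, $a^+$ is. These objects depend only on the fixed environment, so they are constants.
\item Check that the events $E^\pm_\theta$ can be realized by constant-size circuits on seed $U$ with rational probabilities. This calibration makes the posterior after $s^\star$ equal to $(1-z_\varphi)p^-+z_\varphi p^+$, where $z_\varphi$ is the acceptance probability of $\varphi$. In particular, $Z(q_\varphi,s^\star)>0$, so $s^\star$ is a supported signal, which is exactly where the selector must be correct.
\item Note that $\langle q_\varphi\rangle$ has size $O(|\varphi|)$ and is computable in polynomial time. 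The whole reduction is therefore polynomial.
\end{enumerate}

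I expect the main obstacle to be the boundary case $z_\varphi=1/2$, where the posterior equals $p^0$ itself and both $a^-$ and $a^+$ may tie. I would handle it the way the theorem's proof does: either decide $z_\varphi>1/2$ and shift the threshold so that the boundary posterior corresponds to an acceptance probability that can never occur, or choose $p^0$ strictly inside the $a^-$ region. One way to do the latter is to pad $\varphi$ with extra random bits so that $z_\varphi$ avoids the point mapped to the boundary, for example by mapping the threshold to $(2^{n}\cdot\tfrac12+\tfrac12)/2^{n}$. Beyond this check, the corollary follows directly from the fact that the reduction's samplers already have support of size at most two.
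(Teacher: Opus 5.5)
Your proposal is correct and is essentially the paper's argument: the sampler $q_\varphi$ from the theorem's reduction outputs only $s^\star$ and $s^\circ$. Your threshold shift to $\tfrac12+2^{-(n+1)}$ is exactly what the paper's control circuit $B_\varphi$ implements, since its acceptance probability satisfies $z_\varphi-\tfrac12=(2N-2^m-1)/2^{m+2}$ (with $N$ the satisfying count of the padded $m$-variable formula), so $z_\varphi\neq\tfrac12$ and $z_\varphi>\tfrac12$ exactly on strict-majority instances. One small correction: distinct actions may share a payoff vector, so $a^+$ need not be the unique maximizer, and the decoder should test $u(a,\cdot)=u(a^+,\cdot)$ rather than $a=a^+$, as the paper does.
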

\autoref{cor:hard-exact-small-signal} separates the role of the requirement to succeed after \emph{every} supported signal from the size of the signal space. The hardness persists even when the experiment has at most two supported signals, so the cost of this universal quantifier cannot be solely attributed to having to cover an exponentially large signal space. Its force is instead that the selector is not permitted to fail on even the one signal that encodes a difficult likelihood comparison. This distinction raises the natural question studied in the next section: does the hardness disappear if exactness, and hence the required precision, is relaxed while the every-signal requirement is retained? \autoref{thm:signalwise-approximation-threshold} and \autoref{cor:hard-signalwise-small-signal} show that it does not. 

\section{Approximate Selectors}
\label{sec:deterministic-approximation}
In this section, we consider two notions of multiplicative approximation and identify the source of complexity associated with each. 

\subsection{Signalwise approximation}
\label{subsec:signalwise-approximation}

To relax exactness while retaining the requirement that the guarantee hold after every supported signal, we introduce the following notion of signalwise approximation. 

\begin{definition}[Signalwise deterministic approximation]
\label{def:signalwise-approximation}
A deterministic signalwise $c$-selector is a selector $\alpha$ such that
\begin{equation}
\label{eq:signalwise-guarantee}
W_{\alpha(\langle q\rangle,s)}(q,s)
\geq
c\max_{a\in A}W_a(q,s)
\end{equation}
for every sampler $q$ and every signal $s$ satisfying $Z(q,s)>0$.
\end{definition}

For each state, let $M_\theta:=\max_{a\in A}u(a,\theta)$, and define
\begin{equation}
\label{eq:gamma-sig}
\gamma_{\mathrm{sig}}(u)
:=
\max_{d\in A}\min_{\theta\in\Theta}
\frac{u(d,\theta)}{M_\theta}.
\end{equation}

\begin{theorem}
\label{thm:signalwise-approximation-threshold}
Fix $c\in(0,1]$.  If $\mathsf{P}\neq\mathsf{NP}$, then there is a deterministic polynomial-time signalwise $c$-selector if and only if $c\leq\gamma_{\mathrm{sig}}(u)$.
\end{theorem}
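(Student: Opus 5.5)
The plan has two directions: an easy upper bound from a signal-independent action, and a reduction from an $\mathsf{NP}$-complete problem for the lower bound.

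\emph{Sufficiency.} Suppose $c\le\gamma_{\mathrm{sig}}(u)$. Let $d^\star$ attain the maximum in \eqref{eq:gamma-sig}, so that $u(d^\star,\theta)\ge\gamma_{\mathrm{sig}}(u)M_\theta$ for every $\theta$. The constant selector $\alpha\equiv d^\star$ runs in constant time. For every $q$ and every $s$ with $Z(q,s)>0$, it satisfies
\[
W_{d^\star}(q,s)=\sum_\theta\mu_0(\theta)q(s\mid\theta)u(d^\star,\theta)\ge\gamma_{\mathrm{sig}}(u)\sum_\theta\mu_0(\theta)q(s\mid\theta)M_\theta\ge c\max_a W_a(q,s),
\]
where the last step uses $\max_aW_a\le\sum_\theta\mu_0 q M_\theta$.

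\emph{Necessity.} Suppose $c>\gamma_{\mathrm{sig}}(u)$ and $\alpha$ is a polynomial-time signalwise $c$-selector. I will decide \textsc{Sat}. The key fact is that for each action $a$ there is a state $\theta_a$ with $u(a,\theta_a)/M_{\theta_a}\le\gamma_{\mathrm{sig}}(u)<c$. Hence at the point-mass posterior $\delta_{\theta_a}$, action $a$ violates the guarantee.

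Given a circuit $\varphi$ with $n$ inputs, build a sampler $q_\varphi$ as follows. On input $\theta$ and seeds $(R,U)$, output the signal $s^\star=\langle R\rangle$, i.e.\ the string $1\cdot R$, whenever $\varphi(R)=1$; otherwise output $s^\circ$. A satisfying assignment $x$ therefore becomes the signal $1x$ with likelihood $2^{-n}$ in every state. That gives the prior as posterior, which is not useful on its own.

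To make the signal carry information, I instead use the satisfying assignment to index which state is revealed. The difficulty is that the selector sees $s$, so if $s$ contains $x$ it could simply verify. I will therefore hide the assignment: the signal will be a single bit indicating whether a sampled $R$ satisfied $\varphi$. The state dependence comes from $E_\theta$-style events as in the proof of \autoref{thm:exact-bayes-dichotomy}. Concretely, fix a target pair of states. The sampler outputs $s^\star$ when ($\varphi(R)=1$ and $\theta=\theta_1$), or when ($U$ hits a tiny event of probability $\epsilon 2^{-n}$ and $\theta=\theta_2$). If $\varphi$ is unsatisfiable, the posterior after $s^\star$ is $\delta_{\theta_2}$. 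If $\varphi$ is satisfiable, the posterior puts weight at least $1/(1+\epsilon)$ on $\theta_1$.

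\emph{Main obstacle.} The main obstacle is choosing states so that a single selector output decodes satisfiability. With several actions, the set of $c$-acceptable actions at $\delta_\theta$ is $A_\theta=\{a:u(a,\theta)\ge cM_\theta\}$, and $c>\gamma_{\mathrm{sig}}$ says exactly that $\bigcap_\theta A_\theta=\emptyset$. The adaptive fix is to use many runs. Run $\alpha$ on the instance that yields posterior $\delta_{\theta_2}$ when $\varphi$ is unsatisfiable, for a chosen $\theta_2$. Read off the action $a$, and take $\theta_1=\theta_a$, a state where $a$ fails the guarantee. Then rerun $\alpha$ on the sampler mixing $\theta_1$ (via $\varphi$) and $\theta_2$ (via the $\epsilon$-event). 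If $\varphi$ is unsatisfiable, the posterior equals $\delta_{\theta_2}$, identical to before. I will make the encoding identical, so determinism forces output $a$. If $\varphi$ is satisfiable, the posterior is near $\delta_{\theta_1}$, where $a$ is not $c$-optimal for $\epsilon$ small, by strict inequality and continuity; so the output must differ. The reason this works is that the sampler description must literally be the same in the unsatisfiable case. To get that, I will use the same circuit $q_\varphi$ for all pairs, with $\theta_1$ and $\theta_2$ hardwired only through the fixed finite environment. In practice this means one query per choice of $(\theta_1,\theta_2)$, finitely many, followed by a consistency check: $\varphi$ is unsatisfiable iff for each $\theta_2$ the outputs agree with the unsatisfiable prediction. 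Polynomial-time computability of $q_\varphi$ and of $\epsilon=2^{-O(1)}$ is routine. Since only the two signals $s^\star,s^\circ$ are used, this also gives the two-signal restriction.
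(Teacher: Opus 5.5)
Your sufficiency direction is correct and is the paper's argument. The necessity direction has a genuine gap. First, the adaptive step is circular, and determinism does not give what you need. The first query already needs a sampler with some $\theta_1$ hardwired, yet $\theta_1=\theta_a$ is meant to be chosen from that query's output. Once you rerun with $\theta_1=\theta_a$, the description $\langle q\rangle$ has changed, so nothing forces $\alpha$ to repeat $a$: $\alpha$ is a function of $(\langle q\rangle,s)$, not of the posterior. No single circuit can serve all pairs, because the circuit is exactly what ties $\varphi$ to $\theta_1$ and the $\epsilon$-event to $\theta_2$. You also cannot argue from the unsatisfiable case, since you do not know which case holds. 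The first output is only guaranteed to be acceptable at the \emph{actual} posterior.

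Second, and more fundamentally, your pair queries can be completely uninformative. If $a\in A_{\theta_1}\cap A_{\theta_2}$, then at any signal whose posterior is supported on $\{\theta_1,\theta_2\}$ we have $W_a(q,s)\geq c\sum_\theta\mu_0(\theta)q(s\mid\theta)M_\theta\geq c\max_bW_b(q,s)$. So a valid selector may answer every such query with $a$, whatever $\varphi$ is, and the outputs carry no information. The hypothesis $c>\gamma_{\mathrm{sig}}(u)$ only gives $\bigcap_{\theta\in\Theta}A_\theta=\emptyset$, not pairwise disjointness. For example, take three states and three actions with $u(a_i,\theta_j)=1$ for $i\neq j$ and $u(a_i,\theta_i)=1/2$. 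Then $\gamma_{\mathrm{sig}}(u)=1/2$, but for every $c\in(1/2,1]$ each $A_{\theta_i}\cap A_{\theta_j}=\{a_k\}$ with $k\notin\{i,j\}$ is nonempty.

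The missing idea is to use all $m=|\Theta|$ states in one query, attaching a separate formula $\psi_i$ to each state $\theta_i$. The output $a$ then names a state $\theta_j$ with $u(a,\theta_j)/M_{\theta_j}\leq\gamma_{\mathrm{sig}}(u)<c$. This certifies only that $\psi_j$ is not the \emph{unique} satisfiable formula, so $\psi_j$ can be deleted without changing whether $\bigvee_i\psi_i$ is satisfiable. The paper turns this weak ``discard one of $m$'' information into a decision procedure through the SAT self-reduction. It keeps at most $m-1$ live prefixes per depth, branches to at most $2(m-1)$, and applies at most $m-1$ deletions per level. At depth $n$ it checks the surviving complete assignments directly. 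Nothing in your proposal plays the role of this self-reduction, and without it a single formula's satisfiability cannot be decoded from the selector's outputs.
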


\begin{proof}[Proof Overview]
When $c\leq\gamma_{\mathrm{sig}}(u)$, the constant selector that always returns an action attaining the maximum in \eqref{eq:gamma-sig} satisfies the approximation requirement. For the hardness direction, fix $c>\gamma_{\mathrm{sig}}(u)$ and suppose that a polynomial-time signalwise $c$-selector $\alpha$ exists. We show how to use $\alpha$ to solve \textsc{SAT}, a standard $\mathsf{NP}$-complete problem.

The key observation is that, if $\alpha$ outputs an action $a$ after observing a signal $s$, then $s$ cannot perfectly reveal a state
\[
\theta_i\in\arg\min_{\theta\in\Theta}
\frac{u(a,\theta)}{M_\theta}.
\]
Indeed, if $s$ perfectly revealed $\theta_i$, then
\[
\frac{W_a(q,s)}{\max_{a'\in A}W_{a'}(q,s)}
=\frac{u(a,\theta_i)}{M_{\theta_i}}
\leq\gamma_{\mathrm{sig}}(u)<c,
\]
contradicting the signalwise $c$-approximation guarantee.

To exploit this observation, let $\Theta=\{\theta_1,\ldots,\theta_m\}$ and, for $m$ formulas $\psi_1,\ldots,\psi_m$, construct the two-signal sampler
\[
q(\theta;R) = \begin{cases}
s^\star,   & \theta=\theta_i\text{ for some }i\in\{1,\ldots,m\}
    \text{ and }\psi_i(R)=1,\\
s^\circ, & \text{otherwise}.
\end{cases}
\]
After calling $\alpha(\langle q\rangle,s^\star)$ and obtaining $a$, choose an index $i$ minimizing $u(a,\theta_i)/M_{\theta_i}$. The formula $\psi_i$ cannot be the unique satisfiable formula; otherwise, $s^\star$ would perfectly reveal $\theta_i$. Hence,
\[
\bigvee_{j=1}^m\psi_j\text{ is satisfiable}
\quad\Longleftrightarrow\quad
\bigvee_{j\neq i}\psi_j\text{ is satisfiable}.
\]

Repeatedly applying this deletion rule while pruning the standard \textsc{SAT} self-reduction tree keeps the list of residual formulas at constant size after each branching step. Once all variables have been assigned, the remaining assignments can be checked directly, yielding a polynomial-time algorithm for \textsc{SAT}, a contradiction.
\end{proof}

\autoref{thm:signalwise-approximation-threshold} answers the question raised in the previous section: exactness is not essential to the hardness. The threshold $\gamma_{\mathrm{sig}}(u)$ has a direct information-processing interpretation. When $c\leq\gamma_{\mathrm{sig}}(u)$, a single fixed action attains the guarantee without consulting either the sampler or the realized signal, so information processing is dispensable. Once $c>\gamma_{\mathrm{sig}}(u)$, however, the guarantee requires the selector to extract some decision-relevant information from its input, and even an arbitrarily small improvement over the signal-independent benchmark is intractable under $\mathsf{P}\neq\mathsf{NP}$. Thus, when performance must be guaranteed after every supported signal, approximation does not gradually restore tractability: tractability returns only when the requirement is weak enough that the signal can be ignored altogether. Moreover, the hardness construction uses only two signals, yielding the following corollary.
\begin{corollary}\label{cor:hard-signalwise-small-signal}
Fix $c\in(\gamma_{\mathrm{sig}}(u),1]$ and assume $\mathsf{P}\neq\mathsf{NP}$. There is no deterministic polynomial-time signalwise $c$-selector even for the restricted class of samplers $q$ satisfying
\[
\left|\{s:Z(q,s)>0\}\right|\leq 2.
\]
\end{corollary}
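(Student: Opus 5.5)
The plan is to check that the hardness argument for \autoref{thm:signalwise-approximation-threshold} never leaves the restricted class, so the corollary follows directly. Suppose, for contradiction, that $\alpha$ is a deterministic polynomial-time signalwise $c$-selector, with $c>\gamma_{\mathrm{sig}}(u)$, that satisfies \eqref{eq:signalwise-guarantee} for all samplers with at most two supported signals. I will show that the \textsc{SAT} algorithm from the proof overview works using only this weaker guarantee. That yields $\mathsf{P}=\mathsf{NP}$, a contradiction.

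First, every sampler queried by the reduction has the form
\[
q(\theta;R)=\begin{cases}s^\star,&\theta=\theta_i\text{ and }\psi_i(R)=1,\\ s^\circ,&\text{otherwise},\end{cases}
\]
for some list $\psi_1,\ldots,\psi_m$ of residual formulas. Its range is contained in $\{s^\star,s^\circ\}$, so $|\{s:Z(q,s)>0\}|\leq 2$ holds automatically, and $\alpha$ is only ever called on members of the restricted class.

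Second, the guarantee is only invoked at $s=s^\star$, and only when $Z(q,s^\star)>0$. Positivity holds exactly when some $\psi_j$ is satisfiable. When no $\psi_j$ is satisfiable, the output of $\alpha$ is arbitrary. In that case the deletion rule ``$\bigvee_j\psi_j$ satisfiable $\iff\bigvee_{j\neq i}\psi_j$ satisfiable'' still holds trivially, since both sides are false.

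Third, when $Z(q,s^\star)>0$, the key observation applies. If $\psi_i$ were the only satisfiable formula in the list, then $s^\star$ would perfectly reveal $\theta_i$. Here $i$ is the index minimizing $u(a,\theta_i)/M_{\theta_i}$ for $a=\alpha(\langle q\rangle,s^\star)$. The achieved ratio would then be $u(a,\theta_i)/M_{\theta_i}\leq\gamma_{\mathrm{sig}}(u)<c$, which violates the guarantee on this two-signal instance. Hence the deletion rule is sound in every case.

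The remaining pruning of the self-reduction tree, the constant bound on list length, and the polynomial running time are identical to the theorem's proof, because they depend only on the deletion rule and the running time of $\alpha$.

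The one point to check carefully is the list-size bookkeeping. Lists may have length up to $m=|\Theta|$, with formulas assigned to distinct states. If a branching step creates more than $m$ residual formulas, I would apply the deletion repeatedly to blocks of $m$ formulas, each block being one two-signal sampler, until at most $m$ remain. Each such call is still a two-signal instance. A block that is shorter than $m$ leaves some states always emitting $s^\circ$, which keeps the sampler in the restricted class and does not change any of the arguments above.
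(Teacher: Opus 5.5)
Your proposal is correct and follows the paper's route: the paper gets this corollary directly from the proof of \autoref{thm:signalwise-approximation-threshold}, since every sampler that reduction queries outputs only $s^\star$ or $s^\circ$, the guarantee is invoked only at $s^\star$, and the case $Z(q,s^\star)=0$ is harmless, exactly as you check. Your list bound of $m$ rather than the paper's $m-1$ makes no difference, and the remark about shorter blocks is unnecessary: your procedure calls the compressor only when more than $m$ formulas remain, and a shorter block could fail to delete anything if the minimizing state carries no formula.
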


\autoref{cor:hard-signalwise-small-signal} shows that the hardness need not be driven by the straightforward difficulty created by a large signal space. An exponentially large support can certainly impose a computational burden on procedures that must enumerate or cover many possible signals, but such a large support is not necessary here: the hardness persists even when the experiment has at most two supported signals. The force of the ``every-signal'' requirement therefore lies in its quantifier, not in the number of signals. A signalwise selector is not permitted to average away, ignore, or sacrifice even the one signal whose state-contingent likelihoods encode a hard computational distinction. Together, \autoref{thm:signalwise-approximation-threshold} and \autoref{cor:hard-signalwise-small-signal} show that, as soon as the desired guarantee exceeds the best signal-independent benchmark, however slightly, retaining this universal requirement can force the selector to resolve an intractable likelihood comparison at a signal it is not allowed to ignore. Thus, beyond any difficulty caused by the size of the signal space, the central information-processing barrier highlighted here is the requirement to extract enough decision-relevant information from every supported signal.

This conclusion raises a natural question: does the hardness persist if good approximate performance is required only over a high-probability mass of signals, rather than after every supported signal? \autoref{thm:ex-ante-approximation-threshold} and \autoref{thm:pac-ex-ante-threshold} show that it does, while \autoref{prop:finite-support-pac-upper} indicates that the remaining difficulty comes only from having to explore a potentially large signal space.

\subsection{Ex ante approximation}
\label{subsec:ex-ante-approximation}

To formalize good aggregate performance over a high-probability mass of signals, we introduce the following notion of ex ante approximation. For a deterministic selector $\alpha$, define its ex ante payoff by
\begin{equation}
\label{eq:selector-ex-ante-value}
\operatorname{VAL}_\alpha(q)
:=
\sum_s W_{\alpha(\langle q\rangle,s)}(q,s),
\end{equation}
and define the unrestricted Bayes-optimal ex ante payoff by
\[
\operatorname{OPT}(q)
:=
\sum_s\max_{a\in A}W_a(q,s).
\]

\begin{definition}[Ex ante approximation]\label{def:ex_ante_approximation}
A deterministic ex ante $c$-selector is a selector $\alpha$ such that
\[
\operatorname{VAL}_\alpha(q)
\geq c\,\operatorname{OPT}(q)
\]
for every sampler $q$.
\end{definition}

The relevant full-information and no-information benchmarks are
\[
V^{\mathrm{FI}}
:=
\sum_{\theta\in\Theta}\mu_0(\theta)M_\theta 
\qquad \text{and} \qquad
V^{\mathrm{NI}}
:=
\max_{a\in A}
\sum_{\theta\in\Theta}\mu_0(\theta)u(a,\theta).
\]
Define
\[
\gamma_{\mathrm{EA}}(u,\mu_0)
:=
\frac{V^{\mathrm{NI}}}{V^{\mathrm{FI}}}.
\]

\begin{theorem}
\label{thm:ex-ante-approximation-threshold}
Fix $c\in(0,1]$. If $\mathsf{BPP}\neq\mathsf{SZK}$, there is a deterministic polynomial-time ex ante $c$-selector if and only if
\begin{equation}
\label{eq:ex-ante-threshold}
c\leq\gamma_{\mathrm{EA}}(u,\mu_0).
\end{equation}
\end{theorem}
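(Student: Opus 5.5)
The plan is to prove the two directions separately: an explicit constant selector for $c\leq\gamma_{\mathrm{EA}}(u,\mu_0)$, and a reduction from a promise problem complete for $\mathsf{SZK}$ for $c>\gamma_{\mathrm{EA}}(u,\mu_0)$.

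\emph{Easy direction.} Suppose $c\leq\gamma_{\mathrm{EA}}(u,\mu_0)$. Let $a^\ast$ maximize $\sum_\theta\mu_0(\theta)u(a,\theta)$, and let $\alpha\equiv a^\ast$ be the constant selector. Then
\[
\operatorname{VAL}_\alpha(q)=\sum_s\sum_\theta\mu_0(\theta)q(s\mid\theta)u(a^\ast,\theta)=V^{\mathrm{NI}}.
\]
On the other side, pulling the maximum inside the sum over states gives
\[
\operatorname{OPT}(q)\leq\sum_s\sum_\theta\mu_0(\theta)q(s\mid\theta)M_\theta=V^{\mathrm{FI}}.
\]
Hence $\operatorname{VAL}_\alpha(q)\geq\gamma_{\mathrm{EA}}\operatorname{OPT}(q)\geq c\operatorname{OPT}(q)$, and $\alpha$ runs in constant time.

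\emph{Hardness direction.} Fix $c>\gamma_{\mathrm{EA}}$; this forces $V^{\mathrm{NI}}<V^{\mathrm{FI}}$. Let $\delta:=cV^{\mathrm{FI}}-V^{\mathrm{NI}}>0$, a constant because the environment is fixed. I would reduce from \textsc{Statistical Difference}. By Sahai--Vadhan it is $\mathsf{SZK}$-complete, and by polarization we may assume the instance is a pair of circuits $(C_0,C_1)$ satisfying one of two promises:
\begin{itemize}
\item yes: $\|C_0-C_1\|_{TV}\geq1-2^{-k}$;
\item no: $\|C_0-C_1\|_{TV}\leq2^{-k}$.
\end{itemize}
Here $k$ is polynomial and chosen large enough that an error of order $L\,2^{-k}\max u$ is below $\delta/4$.

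Write $\Theta=\{\theta_1,\dots,\theta_m\}$ and let $L=\lceil\log_2 m\rceil$, a constant. For state $\theta_i$ with binary expansion $b_1\cdots b_L$, the sampler $q$ outputs the tuple
\[
\bigl(C_{b_1}(r_1),\dots,C_{b_L}(r_L)\bigr)
\]
using independent seeds $r_1,\dots,r_L$. This is a polynomial-size circuit. The two promise cases then behave as follows.
\begin{itemize}
\item In the no case, all the conditional signal distributions $q(\cdot\mid\theta)$ are within $L2^{-k}$ of each other in total variation. I would bound $\operatorname{OPT}(q)\leq V^{\mathrm{NI}}+2L2^{-k}\max u$ by comparing with an experiment in which every state has the same signal law.
\item In the yes case, each coordinate lets an unbounded decoder recover $b_j$ with error at most $2^{-k}$ (the optimal test between $C_0$ and $C_1$). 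Acting on the decoded state then gives $\operatorname{OPT}(q)\geq V^{\mathrm{FI}}-2L2^{-k}\max u$.
\end{itemize}

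Given a polynomial-time ex ante $c$-selector $\alpha$, consider the following $\mathsf{BPP}$ algorithm:
\begin{enumerate}
\item Sample $\theta\sim\mu_0$ and $s=q(\theta;R)$.
\item Compute $u(\alpha(\langle q\rangle,s),\theta)$.
\item Average over polynomially many repetitions.
\end{enumerate}
This estimates $\operatorname{VAL}_\alpha(q)$ to additive precision $\delta/4$ with high probability, by Hoeffding's inequality, since payoffs are bounded. The two cases are then separated:
\begin{itemize}
\item in the yes case, $\operatorname{VAL}_\alpha\geq c\operatorname{OPT}\geq V^{\mathrm{NI}}+\delta-\delta/4$;
\item in the no case, $\operatorname{VAL}_\alpha\leq\operatorname{OPT}\leq V^{\mathrm{NI}}+\delta/4$.
\end{itemize}
Thresholding the estimate at $V^{\mathrm{NI}}+\delta/2$ therefore decides \textsc{Statistical Difference}, so $\mathsf{SZK}\subseteq\mathsf{BPP}$. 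Combined with $\mathsf{BPP}\subseteq\mathsf{SZK}$, this contradicts $\mathsf{BPP}\neq\mathsf{SZK}$.

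\emph{Main obstacle.} I expect the delicate step to be the two total-variation bounds on $\operatorname{OPT}(q)$: showing that near-identical conditional laws cap the value near $V^{\mathrm{NI}}$, and that coordinatewise near-disjointness lifts the value to near $V^{\mathrm{FI}}$ even though $m$ need not be a power of two. Unused codes can simply be assigned to no state. I would handle both bounds through the coupling characterization of total variation, bounding the payoff loss by $\max u$ times the coupling failure probability.
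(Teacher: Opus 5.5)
Your proposal is correct and follows the paper's proof essentially step for step: the constant prior-optimal selector for the easy direction, and for hardness a polarized Statistical Difference instance, a state-codeword product sampler, a far-case decoding bound $\operatorname{OPT}(q)\geq V^{\mathrm{FI}}-O(\varepsilon)$, a close-case bound near $V^{\mathrm{NI}}$, and Monte Carlo estimation of $\operatorname{VAL}_\alpha(q)$ compared against a threshold in the gap. There are two minor differences. In the close case you cap $\operatorname{OPT}(q)$ itself and then use $\operatorname{VAL}_\alpha\leq\operatorname{OPT}$, whereas the paper applies data processing to the selector's output distribution. Also, the paper passes to a rational $\bar c\in(\gamma_{\mathrm{EA}},c)$ so that the threshold is a computable rational; you should do the same, or hard-wire a rational threshold strictly inside the gap with estimation precision strictly below half the gap, since $c$ may be irrational.
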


The assumption $\mathsf{BPP}\neq\mathsf{SZK}$ is stronger than $\mathsf{P}\neq\mathsf{NP}$.\footnote{Indeed, $\mathsf{BPP}\neq\mathsf{SZK}$ implies $\mathsf{P}\neq\mathsf{NP}$; see \citet{SahaiVadhan2003}.} Nevertheless, the assumption is widely believed by computer scientists: several cryptosystems rely on the presumed hardness of problems in $\mathsf{SZK}$, so an efficient algorithm for all of $\mathsf{SZK}$ would compromise these systems. There is also an evidence for the assumption: \citet{bouland2019power} construct an oracle $A$ relative to which $\mathsf{BPP}^{A}\subsetneq \mathsf{SZK}^{A}$.

\begin{proof}[Proof Overview]
If $c\leq\gamma_{\mathrm{EA}}(u,\mu_0)$, the constant selector that always returns a prior-optimal action satisfies the ex ante $c$-approximation guarantee: it achieves $V^{\mathrm{NI}}$, while the Bayes-optimal value is at most $V^{\mathrm{FI}}$. For the hardness direction, fix $c>\gamma_{\mathrm{EA}}(u,\mu_0)$ and suppose that a deterministic polynomial-time ex ante $c$-selector $\alpha$ exists. We use $\alpha$ to construct a randomized polynomial-time algorithm for Statistical Difference, an $\mathsf{SZK}$-complete problem \citep{SahaiVadhan2003}, with two-sided error probability at most $1/3$.

After applying the standard polarization argument, it suffices, for a sufficiently small $\varepsilon>0$, to distinguish between
\[
d_{\mathrm{TV}}(P_0,P_1)\leq\varepsilon
\qquad\text{and}\qquad
d_{\mathrm{TV}}(P_0,P_1)\geq1-\varepsilon,
\]
where $d_{\mathrm{TV}}$ denotes total variation distance. Assign each state $\theta$ a distinct binary codeword. We construct a sampler $q$ by replacing each bit of the codeword with an independent draw from $P_0$ or $P_1$, according to whether that bit is zero or one.

If $d_{\mathrm{TV}}(P_0,P_1)\geq1-\varepsilon$, the samples almost reveal every bit of the codeword and hence almost reveal the state. Therefore,
\[
\operatorname{OPT}(q)\geq V^{\mathrm{FI}}-O(\varepsilon),
\qquad
\operatorname{VAL}_\alpha(q)
\geq c\bigl(V^{\mathrm{FI}}-O(\varepsilon)\bigr).
\]
If $d_{\mathrm{TV}}(P_0,P_1)\leq\varepsilon$, the signal distributions are nearly the same across states, so the selector can obtain at most the no-information value up to a small error:
\[
\operatorname{VAL}_\alpha(q)
\leq V^{\mathrm{NI}}+O(\varepsilon).
\]
Because $c>\gamma_{\mathrm{EA}}(u,\mu_0)=V^{\mathrm{NI}}/V^{\mathrm{FI}}$, these two bounds are separated by a fixed positive gap when $\varepsilon$ is sufficiently small. Therefore, we can estimate $\operatorname{VAL}_\alpha(q)$ by sampling and use this gap to distinguish the two cases with bounded error, placing Statistical Difference in $\mathsf{BPP}$ and contradicting $\mathsf{BPP}\neq\mathsf{SZK}$.
\end{proof}

\autoref{thm:ex-ante-approximation-threshold} shows that the hardness survives the move from signalwise to ex ante approximation. Relaxing the guarantee does not eliminate the information-processing problem: once the target exceeds the no-information benchmark, the decision maker must condition her action on the observed signal, and even achieving good aggregate performance over a high-probability mass of signals remains intractable. Unlike the earlier every-signal hardness results, however, this difficulty may have a more straightforward source. A high-probability mass can itself be dispersed over exponentially many distinct signals, so performing well on that mass may require the selector to explore and learn the decision implications of exponentially many realizations. In fact, the reduction in the proof of \autoref{thm:ex-ante-approximation-threshold} uses an exponentially large signal support: each realized signal has polynomial length so that there are exponentially many possible realizations generated from the outputs of $P_0$ and $P_1$. Therefore there is open problem  whether deterministic ex ante hardness persists under polynomially bounded support; we currently have neither a polynomial-time selector nor a matching hardness result for this restricted case. We believe that the problem becomes tractable once restricted to signal spaces of polynomially bounded size.

The next section provides supporting evidence by considering a slight relaxation: the ex ante approximation guarantee is required to hold only with high probability, rather than with probability one, thereby allowing randomized selectors. Under this relaxation, the hardness stems solely from the large signal space.

\section{Probably Approximately Correct Selectors}
\label{sec:pac-selectors}

In this section, we turn to randomized selectors and require their approximation guarantees to hold only with high probability.

\subsection{PAC ex ante selectors}
\label{subsec:pac-ex-ante-selectors}

We begin by formally defining randomized selectors and the corresponding notion of probably approximately correct (PAC) ex ante approximation.

\begin{definition}[PAC ex ante selector]
\label{def:pac-ex-ante-selector}
A randomized polynomial-time selector is a uniform probabilistic algorithm $\alpha$ that, on input $(\langle q\rangle,s)$ and random tape $R$, outputs
$\alpha(\langle q\rangle,s;R)\in A$ in time polynomial in $|\langle q\rangle|+|s|$ for every $R$.  For fixed $R$, write $\alpha_R(\langle q\rangle,s):=\alpha(\langle q\rangle,s;R)$.

For fixed parameters $c\in(0,1]$ and $\delta\in(0,1)$, $\alpha$ is a \emph{$(c,\delta)$-PAC ex ante selector} if, for every sampler $q$,
\[
\Pr_R\!\left[
\operatorname{VAL}_{\alpha_R}(q)
\geq c\,\operatorname{OPT}(q)
\right]
\geq1-\delta.
\]
Here $\operatorname{VAL}$ is defined in \eqref{eq:selector-ex-ante-value}.  The fair random tape $R$ is independent of the test signal and is held fixed throughout this ex ante value. $\alpha_R$ is a deterministic selector.
\end{definition}

The PAC notion also implies an expected-payoff guarantee for randomized selectors. Any $(c,\delta)$-PAC ex ante selector satisfies, for every sampler $q$,
\[
\mathbb E_R\!\left[\operatorname{VAL}_{\alpha_R}(q)\right]
\geq
c(1-\delta)\operatorname{OPT}(q).
\]
Indeed, the selector attains at least $c\operatorname{OPT}(q)$ with probability at least $1-\delta$; dropping its positive payoff on the remaining random tapes gives the inequality. Hence, a $(c,\delta)$-PAC guarantee yields a $c(1-\delta)$-approximation in expectation. The PAC analysis can therefore also be interpreted as an analysis of ex ante approximation for randomized selectors evaluated by their expected payoffs.

\begin{theorem}
\label{thm:pac-ex-ante-threshold}
Fix a factor $c\in(0,1]$ and an error probability $\delta\in(0,1)$. If $\mathsf{BPP}\neq\mathsf{SZK}$, there is a polynomial-time randomized $(c,\delta)$-PAC ex ante selector if and only if
\[
c\leq
\gamma_{\mathrm{EA}}(u,\mu_0).
\]
\end{theorem}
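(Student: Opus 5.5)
The plan has two directions. The easy one: if $c\leq\gamma_{\mathrm{EA}}(u,\mu_0)$, take the deterministic constant selector that always returns a prior-optimal action. Viewed as a randomized selector that ignores its tape, it achieves $\operatorname{VAL}=V^{\mathrm{NI}}$ for every sampler. Since
\[
\operatorname{OPT}(q)\leq\sum_s\sum_\theta\mu_0(\theta)q(s\mid\theta)M_\theta=V^{\mathrm{FI}},
\]
we get $\operatorname{VAL}\geq\gamma_{\mathrm{EA}}\operatorname{OPT}(q)\geq c\operatorname{OPT}(q)$ with probability one. This is the same argument as in \autoref{thm:ex-ante-approximation-threshold}.

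For hardness, fix $c>\gamma_{\mathrm{EA}}(u,\mu_0)$ and suppose $\alpha$ is a polynomial-time randomized $(c,\delta)$-PAC ex ante selector. I would reuse the reduction from the proof of \autoref{thm:ex-ante-approximation-threshold} almost verbatim, so as to decide Statistical Difference in $\mathsf{BPP}$. First polarize the instance so that either $d_{\mathrm{TV}}(P_0,P_1)\leq\varepsilon$ or $d_{\mathrm{TV}}(P_0,P_1)\geq 1-\varepsilon$, for a small constant $\varepsilon$ fixed in terms of $c-\gamma_{\mathrm{EA}}$ and $\delta$. Then build the codeword sampler $q$ in which each state's codeword bits are replaced by independent draws from $P_0$ or $P_1$. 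Both bounds from that proof hold tape by tape, because for fixed $R$ the map $\alpha_R$ is a deterministic selector:
\begin{itemize}
\item In the far case, with probability at least $1-\delta$ over $R$ we have $\operatorname{VAL}_{\alpha_R}(q)\geq c(V^{\mathrm{FI}}-O(\varepsilon))$.
\item In the close case, every deterministic selector, and in particular every $\alpha_R$, satisfies $\operatorname{VAL}_{\alpha_R}(q)\leq V^{\mathrm{NI}}+O(\varepsilon)$.
\end{itemize}
The close-case bound comes from coupling: the joint signal distributions under any two states are within total variation $O(k\varepsilon)$, where $k$ is the constant codeword length. Hence the ex ante payoff of any decision rule differs by $O(\varepsilon)$ from that of a rule facing state-independent signals, which is at most $V^{\mathrm{NI}}$.

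The test itself goes as follows. Draw one tape $R$, then estimate $\operatorname{VAL}_{\alpha_R}(q)=\mathbb{E}_{\theta,s}[u(\alpha_R(\langle q\rangle,s),\theta)]$ by Monte Carlo. Each sample draws $\theta\sim\mu_0$, runs $q$, calls $\alpha_R$, and records the utility, which is bounded since $u$ is fixed. Hoeffding with $O(1)$ samples gives an estimate accurate to within a quarter of the gap
\[
g:=cV^{\mathrm{FI}}-V^{\mathrm{NI}}-O(\varepsilon)>0.
\]
Output ``far'' iff the estimate exceeds the midpoint. In the close case the test errs only through the estimation error. In the far case it errs with probability at most $\delta+o(1)$.

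The main obstacle is that $\delta$ may be large, say $\delta>1/3$, so a single tape does not give bounded two-sided error. The asymmetry rescues this. In the close case no tape can ever yield a large value, so I would repeat the procedure with $t$ independent tapes and output ``far'' if any repetition's estimate exceeds the threshold. The far-case error becomes $\delta^t$, and the close-case error grows only to $t$ times the per-estimate error, which extra Monte Carlo samples make tiny. A constant $t$ depending on $\delta$ then gives error at most $1/3$, placing Statistical Difference, and hence $\mathsf{SZK}$, in $\mathsf{BPP}$, a contradiction. I would also check carefully that the close-case upper bound is genuinely uniform over all deterministic selectors, including those that read $\langle q\rangle$. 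This holds because the bound is information-theoretic, and it is the only property of $\alpha_R$ the argument uses.
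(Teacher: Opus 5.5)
Your proposal is correct and follows the paper's proof essentially step for step. Both use the prior-optimal constant selector for the easy direction; for hardness, both use the same polarized codeword sampler, a far-case bound that holds for a $1-\delta$ fraction of tapes against a close-case bound that holds for every tape, and amplification over a constant number of independent tapes, declaring ``far'' iff some tape's Monte Carlo estimate exceeds the midpoint threshold. The only cosmetic differences are that the paper uses a rational $\bar c\in(\gamma_{\mathrm{EA}}(u,\mu_0),c)$ so the threshold is an explicit rational constant, and it estimates each tape's value by per-state sampling rather than by drawing $\theta\sim\mu_0$.
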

\begin{proof}[Proof Overview]
If $c\leq\gamma_{\mathrm{EA}}(u,\mu_0)$, the constant selector that always returns a prior-optimal action satisfies the $(c,\delta)$-PAC ex ante guarantee with probability one. For the hardness direction, fix $c>\gamma_{\mathrm{EA}}(u,\mu_0)$ and suppose that a polynomial-time randomized $(c,\delta)$-PAC ex ante selector $\alpha$ exists. We use the same polarized Statistical Difference instance and state-encoding sampler $q$ as in the proof of \autoref{thm:ex-ante-approximation-threshold}.

If $d_{\mathrm{TV}}(P_0,P_1)\geq1-\varepsilon$, then
\[
\operatorname{OPT}(q)\geq V^{\mathrm{FI}}-O(\varepsilon),
\]
so the PAC guarantee implies
\[
\Pr_R\!\left[
\operatorname{VAL}_{\alpha_R}(q)
\geq c\bigl(V^{\mathrm{FI}}-O(\varepsilon)\bigr)
\right]
\geq1-\delta.
\]
If $d_{\mathrm{TV}}(P_0,P_1)\leq\varepsilon$, the signal distributions are nearly the same across states. For every fixed random tape $R$, the resulting selector $\alpha_R$ is deterministic. By the similar argument,
 \[
\operatorname{VAL}_{\alpha_R}(q)
\leq V^{\mathrm{NI}}+O(\varepsilon)
\qquad\text{for every }R.
\]
Because $cV^{\mathrm{FI}}>V^{\mathrm{NI}}$, the two bounds are separated by a fixed positive gap for sufficiently small $\varepsilon$.

The only new issue is that the lower bound holds with probability $1-\delta$ over the selector's random tape, whereas the upper bound holds for every tape. We therefore draw a constant number of independent tapes. With high constant probability, at least one tape produces a high-value selector in the first case, while no tape can do so in the second. Estimating the value associated with each tape by sampling from $q$ and comparing the largest estimate with a threshold distinguishes the two cases with bounded error. This again places Statistical Difference in $\mathsf{BPP}$, contradicting $\mathsf{BPP}\neq\mathsf{SZK}$.
\end{proof}

\autoref{thm:pac-ex-ante-threshold} shows that allowing randomized selectors and requiring the approximation guarantee to hold only with high probability does not by itself alleviate the hardness identified in \autoref{subsec:ex-ante-approximation}. The next subsection shows that under this relaxation, the remaining hardness is driven entirely by the potentially exponential size of the signal space.

\subsection{Monte Carlo empirical Bayes selectors}
\label{subsec:monte-carlo-empirical-bayes}

In this section, we provide a construction of a selector that is based on Monte Carlo sampling from the sampler $q$. We also show its potentials to be computed in polynomial time. Write
\[
\mathcal S_q
:=
\bigcup_{\theta\in\Theta}\textbf{supp}\{q(\cdot\mid\theta)\},
\qquad
K_q:=|\mathcal S_q|,
\qquad
U:=\max_{a\in A,\,\theta\in\Theta}u(a,\theta),
\]
Fix a sample size $N$, a total order on $A$, and a fallback action $a^\circ\in A$.  Our Monte Carlo empirical Bayes selector has following three main steps: 
\begin{enumerate}
    \item \textbf{Draw and fix the training sample.}
    Because the selector is given $\langle q\rangle$, it can simulate $q(\cdot\mid\theta)$ in polynomial time.  Independently for each state $\theta$, draw
    \[
    S_{\theta,1},\ldots,S_{\theta,N}
    \sim q(\cdot\mid\theta).
    \]
    The resulting training sample is held fixed throughout evaluation. 

    \item \textbf{Form the empirical experiment.}
    For every state $\theta$ and signal $s$, define
    \[
    \widehat q_N(s\mid\theta)
    :=
    \frac{1}{N}\sum_{i=1}^N
    \mathbf 1\{S_{\theta,i}=s\}.
    \]
    The corresponding empirical unnormalized payoff of action $a$ at signal $s$ is
    \begin{equation}
    \label{eq:empirical-unnormalized-payoff}
    \widehat W_a(s)
    :=
    \sum_{\theta\in\Theta}
    \mu_0(\theta)\widehat q_N(s\mid\theta)u(a,\theta).
    \end{equation}

    \item \textbf{Define the fixed policy.}
    For every query signal $s\in\mathcal S_q$, set
    \[
    \widehat\alpha_N(s)
    \in
    \operatorname*{arg\,max}_{a\in A}\widehat W_a(s),
    \]
    with ties resolved by the fixed order, except that $\widehat\alpha_N(s)=a^\circ$ when all empirical counts at $s$ are zero. 
\end{enumerate}

The fixed training sample together with the evaluation rule is an implicit representation of $\widehat\alpha_N$.  On a query signal $s$, the selector can counts the observations equal to $s$ state by state, and computes the finitely many values in \eqref{eq:empirical-unnormalized-payoff}. Therefore, it need not enumerate $\mathcal S_q$ or store a complete signal-to-action table. 

We next analyze how many samples are sufficient for the empirical Bayes selector to satisfy a $(c,\delta)$-PAC guarantee.

\begin{proposition}
\label{prop:finite-support-pac-upper}
Assume $K_q=|\mathcal S_q|<\infty$.  Let $C_\mu := \sum_{\theta\in\Theta}\mu_0(\theta)^2.$ For every $c,\delta\in(0,1)$, if
\[
N
\geq
\frac{2U^2C_\mu}
{(1-c)^2(V^{\mathrm{NI}})^2}
\left(
K_q\log|A|
+
\log\frac{2}{\delta}
\right),
\]
then
\[
\Pr\!\left[
\operatorname{VAL}_{\widehat\alpha_N}(q)
\geq
c\,\operatorname{OPT}(q)
\right]
\geq1-\delta.
\]
\end{proposition}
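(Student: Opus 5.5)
The plan is a uniform-convergence argument over the finite class of signal-to-action policies, followed by the usual ``empirical risk minimizer'' comparison. Only the values of a selector on $\mathcal S_q$ affect $\operatorname{VAL}$, because $W_a(q,s)=0$ for $s\notin\mathcal S_q$. So every deterministic selector restricted to $q$ is a policy $\pi:\mathcal S_q\to A$, and there are exactly $|A|^{K_q}$ such policies. For each policy $\pi$ define the empirical value
\[
\widehat V(\pi):=\sum_{s\in\mathcal S_q}\widehat W_{\pi(s)}(s)
=\frac1N\sum_{i=1}^N\sum_{\theta\in\Theta}\mu_0(\theta)\,u\bigl(\pi(S_{\theta,i}),\theta\bigr).
\]
Its expectation over the training sample is exactly $\operatorname{VAL}_\pi(q)$.

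The first step is a concentration bound for a single fixed $\pi$. I would view $\widehat V(\pi)$ as a function of the $N|\Theta|$ independent draws $S_{\theta,i}$. Changing one draw $S_{\theta,i}$ moves $\widehat V(\pi)$ by at most $\mu_0(\theta)U/N$, since utilities lie in $(0,U]$. The sum of squared differences is therefore
\[
\sum_{\theta,i}\bigl(\mu_0(\theta)U/N\bigr)^2=\frac{U^2C_\mu}{N}.
\]
McDiarmid's inequality then gives
\[
\Pr\bigl[\,|\widehat V(\pi)-\operatorname{VAL}_\pi(q)|\geq t\,\bigr]\leq 2\exp\!\Bigl(-\frac{2t^2N}{U^2C_\mu}\Bigr).
\]
This is the step that produces the factor $C_\mu$ rather than a cruder constant, so I expect getting the bounded-difference constants right to be the main point of care. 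A union bound over all $|A|^{K_q}$ policies shows that, with probability at least $1-2|A|^{K_q}\exp(-2t^2N/(U^2C_\mu))$, the deviation is at most $t$ simultaneously for every $\pi$.

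The second step is the optimization comparison. Since $\widehat V$ is additive over signals, the pointwise empirical argmax $\widehat\alpha_N$ maximizes $\widehat V$ over all policies. The fallback $a^\circ$ at signals with all counts zero is also a maximizer there, because every $\widehat W_a(s)=0$ at such signals. Let $\pi^\star$ be a Bayes-optimal policy, so that $\operatorname{VAL}_{\pi^\star}(q)=\operatorname{OPT}(q)$. On the good event,
\[
\operatorname{VAL}_{\widehat\alpha_N}(q)\geq\widehat V(\widehat\alpha_N)-t\geq\widehat V(\pi^\star)-t\geq\operatorname{OPT}(q)-2t.
\]
Note that $\widehat\alpha_N$ is data-dependent, which is exactly why the uniform bound is needed.

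The final step fixes the constants. Since $\operatorname{OPT}(q)\geq V^{\mathrm{NI}}$ (always playing the prior-optimal action is one policy), it suffices to take $t=(1-c)V^{\mathrm{NI}}/2$. Then $\operatorname{OPT}(q)-2t\geq c\,\operatorname{OPT}(q)$. With this $t$, the failure probability satisfies
\[
2|A|^{K_q}\exp\!\Bigl(-\frac{(1-c)^2(V^{\mathrm{NI}})^2N}{2U^2C_\mu}\Bigr)\leq\delta,
\]
which is exactly equivalent to the stated lower bound on $N$. This completes the argument.
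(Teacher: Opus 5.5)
Your proposal is correct and follows essentially the same route as the paper. The paper also proves, as a lemma, that the empirical Bayes rule maximizes the empirical value over all $|A|^{K_q}$ policies on $\mathcal S_q$. It then applies Hoeffding to the independent summands $\mu_0(\theta)u(h(S_{\theta,i}),\theta)/N$, whose squared ranges sum to $U^2C_\mu/N$, takes a union bound, and sets $r_c=(1-c)V^{\mathrm{NI}}/2$ using $\operatorname{OPT}(q)\geq V^{\mathrm{NI}}$. Your use of McDiarmid in place of Hoeffding makes no difference: $\widehat V(\pi)$ is a sum of independent bounded terms, so the two inequalities give identical constants.
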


Thus, because the decision environment is fixed, \autoref{prop:finite-support-pac-upper} implies that it suffices to draw
\[
N = O\!\left(
\frac{K_q+\log(1/\delta)}{(1-c)^2}
\right).
\]
This rate is tight, up to constants, within the class of fixed-training-sample forward-sampling selectors. In the fixed two-state matching environment and parameter range of \autoref{prop:monte-carlo-pac-lower}, where $K=K_q$, the lower bound proved in Appendix~\ref{appendix:monte-carlo-lower-bounds} matches the preceding upper bound. Hence the worst-case sample complexity in that setting is
\[
N = \Theta\!\left(
\frac{K+\log(1/\delta)}{(1-c)^2}
\right)
\]
samples per state.

More generally, suppose that a promise class comes with a known polynomial $p$ such that
\[
K_q\leq p(|\langle q\rangle|)
\]
for every sampler $q$ in the class. For every fixed pair $(c,\delta)$ with $c<1$ and $\delta\in(0,1)$, both the training phase and the evaluation of each signal then run in polynomial time. Thus, on this promise class, the Monte Carlo empirical Bayes selector is a uniform randomized polynomial-time $(c,\delta)$-PAC selector. 

Thus, \autoref{prop:finite-support-pac-upper} clarifies that, for fixed approximation and confidence parameters, the potentially exponential size of the signal support is the only remaining computational obstacle to achieving an ex ante approximation with high probability. The intuition is straightforward: when probability mass is spread across exponentially many signals, a tractable empirical Bayes selector can draw only polynomially many training samples, which may be insufficient to estimate the state-contingent signal distributions accurately enough for reliable decision making.

Moreover, a related point explains why a sampling-based method cannot satisfy the unconditional ex ante guarantee in \autoref{def:ex_ante_approximation}, even when the signal support has constant size. Any finite training sample has a positive--although possibly negligible--probability of being unrepresentative, in which case the empirical distribution and the induced policy may perform poorly. A negligible failure probability is still nonzero and therefore cannot satisfy a guarantee that must hold with probability one. Therefore, whether there is a tractable deterministic selector that can satisfy the ex ante approximation guarantee when the signal support has constant size still remains an open question.

\section{Discussions}

\subsection{Randomized selectors} 
Randomization affects exact and approximate signalwise selection differently. Suppose that a randomized selector is evaluated by its expected payoff conditional on every supported signal:
\[
\mathbb E_R\!\left[
W_{\alpha(\langle q\rangle,s;R)}(q,s)
\right]
\geq
c\max_{a\in A}W_a(q,s).
\]
For exact selection, randomization cannot help. Since no realized action can yield more than the Bayes-optimal payoff, attaining that payoff in expectation requires every action chosen with positive probability to be Bayes optimal. Randomization can therefore serve only as a tie-breaking device, and the exact-selection dichotomy in \autoref{thm:exact-bayes-dichotomy} remains unchanged.

For approximate selection, by contrast, randomization convexifies the action set and may improve the best signal-independent guarantee. For a lottery $\lambda\in\Delta(A)$, define the mixed-action benchmark
\[
\gamma_{\mathrm{sig}}^{\mathrm{mix}}(u)
:=
\max_{\lambda\in\Delta(A)}
\min_{\theta\in\Theta}
\frac{u(\lambda,\theta)}{M_\theta}.
\] 
Because pure actions are degenerate lotteries, $\gamma_{\mathrm{sig}}^{\mathrm{mix}}(u)\geq\gamma_{\mathrm{sig}}(u)$. For every $c<\gamma_{\mathrm{sig}}^{\mathrm{mix}}(u)$, a sufficiently accurate dyadic approximation to an optimizing lottery gives a polynomial-time selector that ignores both the sampler and the signal. The boundary $c=\gamma_{\mathrm{sig}}^{\mathrm{mix}}(u)$ is also attainable whenever an optimizing lottery can be implemented exactly in the fair-coin randomization model.

For every fixed $c>\gamma_{\mathrm{sig}}^{\mathrm{mix}}(u)$, the reduction underlying \autoref{thm:signalwise-approximation-threshold} can be adapted by independently rerunning the selector and estimating its induced lottery over actions. Amplification makes the probability of any erroneous deletion small over the entire self-reduction. Because the surviving assignments are checked directly, an unsatisfiable instance is never accepted; hence such a selector would imply $\textsc{SAT}\in\mathsf{RP}$ and therefore $\mathsf{RP}=\mathsf{NP}$. Under the standard assumption $\mathsf{RP}\neq\mathsf{NP}$, the signalwise hardness thus persists above $\gamma_{\mathrm{sig}}^{\mathrm{mix}}(u)$. The construction still uses at most two supported signals, so the small-support implication of \autoref{cor:hard-signalwise-small-signal} also holds.

From economic perspective, randomization allows the decision maker to hedge across states without processing the signal, thereby raising the signal-independent benchmark. Once the desired guarantee exceeds the best such hedge, however, the selector must exploit signal-specific information. Requiring it to do so after every supported signal preserves the same qualitative computational barrier.

From a computational perspective, it is widely conjectured that randomization does not enlarge the class of problems solvable in polynomial time, that is, $\mathsf{P}=\mathsf{BPP}$. Under this conjecture, $\mathsf{RP}=\mathsf{P}$, so the assumption $\mathsf{RP}\neq\mathsf{NP}$ used above follows from the standard assumption $\mathsf{P}\neq\mathsf{NP}$. The persistence of the two-signal result has a more direct explanation: amplification repeatedly invokes the selector on the same input and operates only on its internal randomness, leaving the sampler and its signal support unchanged. More broadly, the belief that randomization can often be derandomized provides heuristic support for our conjecture that deterministic ex ante approximation may become tractable under polynomially bounded signal support. 

\subsection{Economic interpretations for computer science framework} 

\paragraph{Worst-case complexity.}
In this paper, we consider the worst-case complexity. The question is whether a \emph{single uniform} polynomial-time selector satisfies the relevant guarantee for every sampler description $q$ and, for the exact and signalwise criteria, every supported signal $s$. Thus, the results do not imply that every sampler is difficult to difficult to process. Particular experiments may admit efficient Bayesian choice because their likelihoods or posterior comparisons have additional structure.

Therefore, the appropriate understanding is that fixing a finite decision environment alone does not guarantee tractable information processing. The results leave substantial room for uniform algorithms on economically or statistically structured promise classes of samplers. The polynomial-support result in \autoref{prop:finite-support-pac-upper} provides one example: a known polynomial bound on $K_q$ restores a randomized polynomial-time PAC selector for fixed approximation and confidence parameters. At the same time, the two-signal hardness results for exact and signalwise selection show that small support alone does not restore tractability under stronger, every-signal guarantees, while the constant-support deterministic ex ante case remains open. More generally, tractability may depend on features such as efficiently evaluable likelihoods, restricted circuit or factorization structure, symmetry, or low-dimensional sufficient statistics. Identifying the relevant structural parameters--and determining how running time and sample complexity depend on them--is a natural direction for further research. Average-case or smoothed analyses could complement the present worst-case results by asking how frequently the hard instances arise in economically relevant classes of experiments.

Another interpretation of our hardness results is that, for any tractable selector, we can always find an experiment where some signal realizations are sufficiently difficult to interpret. Therefore, this suggests a strategic extension where the signal-generating process is itself a design variable. A sender may seek to make decision-relevant information costly for a computationally bounded receiver to extract. One especially sharp question is whether a designer can change this computational burden while holding the induced Blackwell experiment fixed and varying only its representation; the present results do not establish such representation-only obfuscation. Characterizing when a designer can efficiently create and exploit interpretation costs, how they affect information design and equilibrium behavior, and which restrictions on admissible sampler representations eliminate them is a natural direction for future work.

\paragraph{Polynomial and asymptotic running time.}
Our notion of tractability is asymptotic. Let $n=|\langle q\rangle|+|s|$, polynomial time requires the number of elementary operations to be bounded by $Cn^k$ for some constants $C$ and $k$, uniformly over inputs as $n$ grows. It does not imply that a selector is fast on a particular experiment: a large constant or a high polynomial degree may make a polynomial-time algorithm impractical at economically relevant scales. Conversely, an algorithm with an exponential worst-case bound may still be fast on small or favorable instances. Polynomial time also does not by itself provide an economically transparent interpretation or a plausible heuristic structure for the selector.

What this distinction captures is scalability. Polynomial time is a permissive benchmark: it requires only that running time remain bounded by some polynomial as the size of the experiment grows. Our hardness results show that, in the stated hardness regimes, Bayesian signal processing fails even this weak scalability requirement. This makes the conclusion especially strong: the corresponding Bayesian performance requirement is computationally demanding even under a generous notion of tractability, providing a justification for bounded rationality and costly Baysian inference.


\subsection{Sources of bounded rationality and computational hardness}

This subsection provides a detailed summary of how the literature explains the sources of bounded rationality and the origins of computational difficulty in decision making. Then, we clarify how our findings differ from, and contribute to, the existing literature.

The literature surveyed here can be organized around three distinct questions. The first asks why complete substantive rationality is an inadequate description of decision making. \citet{Simon1955} replaces exhaustive maximization with a satisficing search procedure that stops when an aspiration level is met, while \citet{Simon1978} shifts attention from the optimality of an outcome to the feasibility of the process that produces it through the distinction between substantive and procedural rationality. \citet{Conlisk1996} organizes the case for bounded rationality around behavioral evidence, the explanatory success of boundedly rational models, the limits of methodological defenses of unbounded rationality, and the scarcity of deliberation itself. \citet{Radner2000} similarly treats observation, computation, memory, communication, and delay as economically relevant constraints or costs, while distinguishing models that incorporate such costs into otherwise standard optimization from stronger departures associated with genuinely bounded rationality. In this first strand, the central message is that decision procedures and cognitive capacity are themselves economic objects rather than costless background assumptions; these papers explores new possible modeling approaches motivated by the gap between ideal and implementable choice rather than deriving  hardness results for the optimal (Bayesian) decision rule.

The second question asks how a resource-constrained decision maker optimally departs from complete Bayesian inference or unconstrained maximization. \citet{HalpernPass2010,halpern2015algorithmic,AlaouiPenta2022} let the agent choose an algorithm, or decide whether to continue reasoning, by trading the expected improvement in the decision against the cost of obtaining it. \citet{Sims2003,CaplinDean2015,MatejkaMcKay2015} instead make the information structure endogenous: the agent allocates a limited communication capacity or selects a signal structure by balancing its decision value against an entropy-based or more general information cost. \citet{Gabaix2014,Wilson2014} restrict which dimensions enter the agent's internal model or which pieces of past evidence remain available, thereby generating selective attention and systematic updating biases. \citet{LiederGriffiths2020,ZhuGriffiths2026} interpret approximate inference as the solution to an accuracy--resource tradeoff; depending on how cognitive cost is represented, the resulting behavior may take the form of a heuristic inference procedure or a tempered response to new likelihood information. These models generally take a cognitive or informational limitation as a primitive and use it to predict stopping, inattention, sparsity, memory distortions, or approximate updating.

The third question is whether the normative objective itself can be implemented by any efficient algorithm. \citet{Eboli2003} measures the cost of processing information under particular representational architectures, showing how associative memory and Bayesian-network structure generate different computational costs. \citet{Camara2022} studies choice from compactly represented combinatorial menus and shows how separable structure can restore tractability, thereby providing a computational rationale for narrow bracketing. \citet{HazlaEtAl2021} show that fully Bayesian choice in social-learning networks can be hard because an observed action history may encode a combinatorial set of private-signal histories, while particular network structures restore tractability. Classical work in artificial intelligence makes the role of representation especially explicit: \citet{Cooper1990} shows that compact Bayesian networks can encode NP-hard exact-inference problems; \citet{DagumLuby1993} show that relative or additive approximation need not eliminate the encoded distinction; and \citet{Roth1996} connects a broad class of probabilistic and logical reasoning tasks to the hardness of propositional model counting. This strand is closest to our analysis because it asks what a computational procedure can implement, rather than which departure from Bayesian choice is optimal.

Relative to these literatures, our analysis differs along four dimensions. First, we do not posit a particular form of bounded cognition: the prior, utility function, and Bayesian objective remain unchanged, and the question is one of algorithmic implementability rather than the rationalization of behavioral deviations. Second, the economic environment--the state space, action set, prior, and utility matrix--is fixed; the difficulty does not come from a growing action menu, state space, or social network, but from the inference from the signal-generating process and the realized signal. Third, the required output is an action rather than a posterior, so only the location of the posterior relative to an action boundary matters; nevertheless, implicitly represented likelihoods can create a \emph{precision requirement} near that boundary. Fourth, the hardness depends on the form of the guarantee: exact and signalwise hardness can persist with only two supported signals, because the requirement to succeed after \emph{every} supported signal prevents the selector from ignoring a difficult realization. By contrast, deterministic ex ante hardness currently relies on exponentially large support, while a known polynomial support bound restores randomized PAC tractability. Thus, we separate the roles of precision, the every-signal quantifier, and signal-support size.

\section{Conclusions}

In this paper, we develop a computational framework for Bayesian signal processing. Under the standard conjecture that $\mathsf{P}\neq\mathsf{PP}$, we establish a sharp boundary: there is a uniform polynomial-time selector that implements Bayes-optimal choice for every signal-generating process if and only if one action is optimal in every state. This result provides a foundation for boundedly rational choice in real-world decision problems. It also supports models that assign nonnegligible costs to full Bayesian inference.

We also identify how the source of computational hardness changes as the performance guarantee is relaxed. Exact choice can require resolving arbitrarily fine likelihood differences near an action boundary, while the universal quantifier requires the selector to succeed after every supported signal, including the hardest one. Relaxing exactness does not remove the every-signal barrier: any signalwise guarantee above the best signal-independent benchmark remains hard, even with only two supported signals, so the hardness is not necessarily caused by the size of the signal space. Moving to ex ante performance also leaves the problem hard over unrestricted signal-generating processes. For PAC ex ante performance, however, the potentially exponential size of the signal support is the only remaining source of computational difficulty: when the support is polynomially bounded, the Monte Carlo empirical Bayes algorithm provides a tractable solution for any fixed nonexact approximation and confidence level. This is the only general, nontrivial, signal-dependent tractability result established in this paper. They also suggest that, in the uniform worst case, knowing the full signal-generating process may offer no tractability advantage over black-box sampling access. Therefore, these results provide a computational justification for sample-based methods in economics, computer science, and statistics.

Our framework opens the door to a broader analysis of the tractability of optimal and approximately optimal Bayesian decision making and the sources of their computational hardness. One natural direction for future work is to identify restricted classes of samplers that admit efficient selectors implementing Bayes-optimal or approximately optimal choices and to characterize the structural properties that restore such tractability. A particularly immediate open question is whether bounded signal support also restores deterministic ex ante tractability where we conjecture that it does.

\bibliographystyle{plainnat}
\bibliography{signal}

\appendix
\section{Omitted Proofs}\label{appendix:proof}

\begin{proof}[Proof of {\hyperref[thm:exact-bayes-dichotomy]{Theorem~\ref*{thm:exact-bayes-dichotomy}}}]
\mbox{}\par

Suppose $D(u)\neq\emptyset$ and choose $d\in D(u)$.  For all $a\in A$, $q$, and $s$,
\[
W_d(q,s)-W_a(q,s)
=\sum_{\theta\in\Theta}\mu_0(\theta)q(s\mid\theta)
\bigl[u(d,\theta)-u(a,\theta)\bigr]\geq0.
\]
Thus the constant selector $(q,s)\mapsto d$ is exactly optimal.

For the converse, suppose $D(u)=\emptyset$.

\noindent\textbf{Step 1: Construct a rational posterior line crossing a decision boundary.}

By the standard rationalizability result, together with arbitrarily small generic rational perturbations, there are payoff-distinct actions $a$ and $b$ and rational full-support beliefs $\mu_a$ and $\mu_b$ such that the payoff vector of $a$ is uniquely optimal at $\mu_a$ but not optimal at $\mu_b$, while the payoff vector of $b$ is uniquely optimal at $\mu_b$ but not optimal at $\mu_a$; moreover, their connecting segment crosses the finitely many indifference hyperplanes one at a time.

Along the segment $(1-\tau)\mu_a+\tau\mu_b$, the optimal payoff vector must therefore change.  Since the endpoints, utilities, and indifference hyperplanes are rational, every crossing parameter is rational.  Choose such a parameter $t\in(0,1)\cap\mathbb Q$ and let
\[
p^0:=(1-t)\mu_a+t\mu_b,
\qquad
h:=\mu_b-\mu_a.
\]
Let $a^-$ and $a^+$ have the distinct optimal payoff vectors immediately before and after this crossing.  For a sufficiently small rational $\varepsilon>0$, define
\[
p^-=p^0-\varepsilon h,
\qquad
p^+=p^0+\varepsilon h.
\]
Then $p^-$ and $p^+$ are rational and have full support, and the payoff vectors of $a^-$ and $a^+$ are uniquely optimal on the two corresponding open half-segments.  Hence, for $z\in[0,1]$,
\begin{equation}
\label{eq:posterior-line}
\begin{aligned}
p(z)
&=(1-z)p^-+zp^+\\
&=p^0+(2z-1)\varepsilon h.
\end{aligned}
\end{equation}
The unique optimal payoff vector along the two open half-segments is therefore summarized by
\begin{equation}
\label{eq:posterior-winners}
\left\{
u(c,\cdot):
c\in\operatorname*{arg\,max}_{a\in A}\sum_{\theta\in\Theta}p_\theta u(a,\theta)
\right\}
=
\begin{cases}
\{u(a^-,\cdot)\}, & 0\leq z<1/2,\\
\{u(a^+,\cdot)\}, & 1/2<z\leq1.
\end{cases}
\end{equation}
At $z=1/2$, the posterior is exactly $p^0$, where both $a^-$ and $a^+$ are optimal, possibly along with other actions. 

\noindent\textbf{Step 2: Build a fair-coin sampler that realizes the posterior line.}

To make the posterior after a target signal proportional to $p^\pm$, the statewise likelihood of that signal should be proportional to $p_\theta^\pm/\mu_0(\theta)$.  Define the fixed positive rational numbers
\[
\rho_\theta^-:=\frac{p_\theta^-}{\mu_0(\theta)},
\qquad
\rho_\theta^+:=\frac{p_\theta^+}{\mu_0(\theta)}.
\]
Choose a positive integer $M$ divisible by the reduced denominator of every $\rho_\theta^\pm$.  Then
\[
k_\theta^\pm:=M\rho_\theta^\pm
\]
is a positive integer.  Choose a fixed integer $L$ large enough that
\[
k_\theta^\pm\leq2^L
\quad\text{for every }\theta\in\Theta,
\]
and set
\[
\lambda:=\frac{M}{2^L},
\qquad
e_\theta^\pm
:=\lambda\rho_\theta^\pm
=\frac{k_\theta^\pm}{2^L}.
\]
Thus $e_\theta^\pm\in(0,1]$.  If $U$ is the integer represented by $L$ independent fair bits, then
\[
U\sim\operatorname{Unif}\{0,\ldots,2^L-1\},
\qquad
\Pr(U<k_\theta^\pm)=e_\theta^\pm.
\]

Fix two distinct one-bit signals, a target signal $s^\star$ and an alternative signal $s^\circ$.  Let $R$ be the random seed of any Boolean fair-coin subcircuit $B$, and let $b:=B(R)$.  Draw the $L$ bits defining $U$ independently of $R$.  Write
\[
z:=\Pr_R(B(R)=1).
\]
The number $z$ is not supplied to the sampler and need not be computed; it is only the acceptance probability induced by the random seed $R$.  On input $\theta$, define the sampler $q_B$ by
\begin{equation}
\label{eq:sampler-rule}
q_B(\theta;R,U)=
\begin{cases}
s^\star, & b=0\text{ and }U<k_\theta^-,\\
s^\star, & b=1\text{ and }U<k_\theta^+,\\
s^\circ, & \text{otherwise}.
\end{cases}
\end{equation}
Consequently,
\[
q_B(s^\star\mid\theta)
=(1-z)e_\theta^-+ze_\theta^+.
\]
The choice of the scale $\lambda$ makes the ex ante probability of $s^\star$ independent of $z$.  Indeed,
\[
\sum_{\theta\in\Theta}\mu_0(\theta)e_\theta^\pm
=\lambda\sum_{\theta\in\Theta}p_\theta^\pm
=\lambda,
\]
and therefore
\begin{equation}
\label{eq:sampler-mass}
\begin{aligned}
Z(q_B,s^\star)
&=\sum_{\theta\in\Theta}
  \mu_0(\theta)q_B(s^\star\mid\theta)\\
&=(1-z)\lambda+z\lambda
=\lambda>0.
\end{aligned}
\end{equation}
By Bayes' rule, for every $\theta$,
\begin{equation}
\label{eq:sampler-posterior}
\begin{aligned}
\Pr(\theta\mid s^\star)
&=\frac{\mu_0(\theta)q_B(s^\star\mid\theta)}
        {Z(q_B,s^\star)}\\
&=(1-z)p_\theta^-+zp_\theta^+
=p_\theta(z).
\end{aligned}
\end{equation}
Comparing \eqref{eq:sampler-posterior} with \eqref{eq:posterior-line}, the acceptance probability $z$ of $B$ is exactly the coordinate on that posterior line.

\noindent\textbf{Step 3: Reduction.}

We reduce from strict \textsc{Majsat}. Let $\psi(x)$ be an input formula on $n$ variables and let
\[
S:=\#\operatorname{SAT}(\psi).
\]
Introduce a new variable $y$ and construct a padded formula as follows.
\[
\varphi(x,y)
:=\psi(x)\land(y\lor\neg y).
\]
It has $m=n+1$ variables and we have 
\[
N:=\#\operatorname{SAT}(\varphi)=2S.
\]
Hence
\[
S>2^{n-1}
\quad\Longleftrightarrow\quad
N>2^{m-1},
\]
so the padding preserves strict-majority membership.

Draw two fair bits encoding
\[
J\in\{00,01,10,11\}
\]
and independently draw a uniform assignment $X\in\{0,1\}^m$.  Define
\begin{equation}
\label{eq:majority-circuit}
B_\varphi(J,X)=
\begin{cases}
\varphi(X), & J=00,\\
\varphi(X), & J=01,\\
\mathbf{1}\{X\neq0^m\}, & J=10,\\
0, & J=11.
\end{cases}
\end{equation}
Among the $2^{m+2}$ equally likely pairs $(J,X)$, the first two branches contribute $N$ accepting seeds each, the third contributes $2^m-1$, and the fourth contributes none.  Therefore
\[
z_\varphi:=\Pr(B_\varphi=1)
=\frac{2N+2^m-1}{2^{m+2}},
\]
and
\[
z_\varphi-\frac12
=\frac{2N-2^m-1}{2^{m+2}}.
\]
Because $m\geq1$, the numerator is odd and cannot be zero.  Furthermore,
\[
2N-2^m-1>0
\quad\Longleftrightarrow\quad
N\geq2^{m-1}+1
\quad\Longleftrightarrow\quad
N>2^{m-1}.
\]
It follows that
\begin{equation}
\label{eq:majority-side}
\begin{aligned}
z_\varphi&\neq\frac12,\\
\psi\in\textsc{Majsat}
&\quad\Longleftrightarrow\quad z_\varphi>\frac12,\\
\psi\notin\textsc{Majsat}
&\quad\Longleftrightarrow\quad z_\varphi<\frac12.
\end{aligned}
\end{equation}

Use the circuit $B_\varphi$ defined in \eqref{eq:majority-circuit} as the control subcircuit in the sampler rule \eqref{eq:sampler-rule}.  Explicitly, on input state $\theta$, the constructed circuit $q_\varphi$ draws the independent fair random bits $(J,X,U)$ specified in the preceding definitions, computes $B_\varphi(J,X)$, and outputs
\[
q_\varphi(\theta;J,X,U)=
\begin{cases}
s^\star,
  & B_\varphi(J,X)=0\text{ and }U<k_\theta^-,\\
s^\star,
  & B_\varphi(J,X)=1\text{ and }U<k_\theta^+,\\
s^\circ, & \text{otherwise}.
\end{cases}
\]
This display is the complete definition of the sampler produced from $\psi$.  Averaging over its random bits gives
\[
q_\varphi(s^\star\mid\theta)
=(1-z_\varphi)e_\theta^-
+z_\varphi e_\theta^+,
\qquad
q_\varphi(s^\circ\mid\theta)
=1-q_\varphi(s^\star\mid\theta).
\]
Applying \eqref{eq:sampler-mass} and \eqref{eq:sampler-posterior} with $B=B_\varphi$ gives
\[
Z(q_\varphi,s^\star)=\lambda>0
\]
and the posterior conditional on the target signal is
\[
\Pr(\theta\mid s^\star)
=(1-z_\varphi)p_\theta^-
+z_\varphi p_\theta^+
=p_\theta(z_\varphi).
\]
Equations~\eqref{eq:posterior-winners} and~\eqref{eq:majority-side} yield
\[
\begin{array}{lll}
\psi\in\textsc{Majsat}
&\Longrightarrow z_\varphi>1/2
&\Longrightarrow u(a^+,\cdot)\text{ is the unique optimal payoff vector},\\[1mm]
\psi\notin\textsc{Majsat}
&\Longrightarrow z_\varphi<1/2
&\Longrightarrow u(a^-,\cdot)\text{ is the unique optimal payoff vector}.
\end{array}
\]

It remains to verify that this is a polynomial-time construction.  All objects obtained from the fixed environment--including $a^-$, $a^+$, $p^-$, $p^+$, $M$, $L$, $\lambda$, the thresholds $k_\theta^\pm$, and the two signals--are hard-wired constants.  Given $\psi$, the padding takes linear time.  The circuit contains one copy of $\varphi$, constant-size branch-selection logic, $O(m)$ gates for the test $X\neq0^m$, and the fixed state-dependent threshold circuit.  It uses exactly
\[
m+2+L
\]
fair random bits.  Its description size is
\[
O\bigl(|\psi|+m+|\Theta|L\bigr)=O(|\psi|),
\]
because $m\leq|\psi|+1$ and $\Theta$ and $L$ are fixed.  The gate list and wiring can therefore be written in polynomial time, and $s^\star$ has constant length.  Crucially, the constructor never computes either $S$ or $N$; those counts appear only in the analysis of the circuit's acceptance probability.

Define
\[
F(\psi):=(\langle q_\varphi\rangle,s^\star).
\]
The preceding paragraph proves that $F$ is computable in polynomial time.  Now let $\alpha$ be any exact Bayes selector and define the fixed output decoder
\[
g(a):=\mathbf{1}\{u(a,\cdot)=u(a^+,\cdot)\}.
\]
Since the environment and $a^+$ are fixed, $g$ runs in constant time. By construction, we have that
\[
g\!\left(
  \alpha(\langle q_\varphi\rangle,s^\star)
\right)=1
\quad\Longleftrightarrow\quad
\psi\in\textsc{Majsat}.
\]
Thus one call to any exact selector, followed by the fixed decoder $g$, solves the $\mathsf{PP}$-hard source problem.  Computing an exact Bayes action is therefore $\mathsf{PP}$-hard under a polynomial-time one-call output-decoding reduction. Finally, if $D(u)=\emptyset$ and a polynomial-time exact selector existed, $\mathsf{PP}$-hardness would imply $\mathsf{PP}\subseteq\mathsf{P}$ and hence $\mathsf{P}=\mathsf{PP}$, a contradiction.
\end{proof}

\begin{proof}[Proof of \autoref{thm:signalwise-approximation-threshold}]
First suppose that $c\leq\gamma_{\mathrm{sig}}(u)$, and choose an action $d$ attaining the maximum in \eqref{eq:gamma-sig}.  This maximum is attained because $A$ is finite.  

For every $\theta\in\Theta$, the definition of $d$ gives
$u(d,\theta)\geq\gamma_{\mathrm{sig}}(u)M_\theta$. Consequently,
\begin{equation*}
\begin{aligned}
W_d(q,s)
&=\sum_{\theta\in\Theta}\mu_0(\theta)q(s\mid\theta)u(d,\theta)\\
&\geq
\gamma_{\mathrm{sig}}(u)
\sum_{\theta\in\Theta}\mu_0(\theta)q(s\mid\theta) M_\theta\\
&\geq
\gamma_{\mathrm{sig}}(u)
\max_{a\in A}W_a(q,s).
\end{aligned}
\end{equation*}
Thus the constant selector that always returns $d$ achieves every factor $c\leq\gamma_{\mathrm{sig}}(u)$.

For the converse, fix $c>\gamma_{\mathrm{sig}}(u)$ and suppose that a polynomial-time deterministic signalwise $c$-selector $\alpha$ exists.  We show, step by step, how to use $\alpha$ as an oracle to decide an arbitrary instance of \textsc{SAT}, which is a well-known $\mathsf{NP}$-complete problem.  

Write
\[
\Theta=\{\theta_1,\ldots,\theta_m\}.
\]
The hard case necessarily has $m\geq2$.  Indeed, if $m=1$, an action attaining $M_{\theta_1}$ makes \eqref{eq:gamma-sig} equal to one, which is incompatible with $c\leq1$ and $c>\gamma_{\mathrm{sig}}(u)$.

\noindent\textbf{Step 1: Construct an $m$-to-$(m-1)$ SAT list compressor.}

Suppose we are given $m$ residual formulas
\[
\boldsymbol\psi=(\psi_1,\ldots,\psi_m)
\]
over the same declared variables $y_1,\ldots,y_r$. The compressor will delete one formula while preserving whether at least one formula in the list is satisfiable.

Fix two distinct one-bit signals $s^\star$ and $s^\circ$.  Construct a fair-coin sampler $q_{\boldsymbol\psi}$ that draws
\[
R\sim\operatorname{Unif}\{0,1\}^r
\]
and, on input state $\theta$, applies the rule
\[
q_{\boldsymbol\psi}(\theta;R)
=
\begin{cases}
s^\star,
  & \theta=\theta_i\text{ for some }i\in\{1,\ldots,m\}
    \text{ and }\psi_i(R)=1,\\
s^\circ, & \text{otherwise}.
\end{cases}
\]
In state $\theta_i$, the target signal is produced by exactly the satisfying assignments of $\psi_i$, and therefore
\begin{equation}
\label{eq:sat-compressor-likelihood}
q_{\boldsymbol\psi}(s^\star\mid\theta_i)
=
\frac{\#\operatorname{SAT}(\psi_i)}{2^r}.
\end{equation}
Since $m$ is fixed by the environment, the sampler contains only a constant number of formula circuits and its description is polynomial in the total description length of the residual formulas.  The same construction also covers $r=0$, with $\{0,1\}^0$ interpreted as a singleton.

Run the selector on the target signal and choose
\[
a=\alpha(\langle q_{\boldsymbol\psi}\rangle,s^\star),
\qquad
j\in\arg\min_{1\leq i\leq m}
\frac{u(a,\theta_i)}{M_{\theta_i}}.
\]
The index $j$ is computable by a constant-size lookup in the fixed utility table.  By the definition of \eqref{eq:gamma-sig},
\begin{equation}
\label{eq:sat-compressor-bad-state}
\frac{u(a,\theta_j)}{M_{\theta_j}}
=\min_{1\leq i\leq m}
  \frac{u(a,\theta_i)}{M_{\theta_i}}
\leq
\max_{b\in A}\min_{1\leq i\leq m}
  \frac{u(b,\theta_i)}{M_{\theta_i}}
=\gamma_{\mathrm{sig}}(u)
<c.
\end{equation}
The compressor deletes $\psi_j$.

\noindent\textbf{Step 2: Prove that the deletion is safe.}

We verify that
\begin{equation}
\label{eq:sat-compressor-preserves-or}
\bigvee_{i=1}^m\psi_i\text{ is satisfiable}
\quad\Longleftrightarrow\quad
\bigvee_{i\neq j}\psi_i\text{ is satisfiable}.
\end{equation}
There are three cases.

\textbf{Case 1:} If no formula is satisfiable, both sides of \eqref{eq:sat-compressor-preserves-or} are false, regardless of which formula is deleted. 

\textbf{Case 2:} If at least two formulas are satisfiable, deleting one formula leaves at least one satisfiable formula, so both sides of \eqref{eq:sat-compressor-preserves-or} are true.

\textbf{Case 3:} It remains to consider the case in which exactly one formula, say $\psi_k$, is satisfiable.  We claim that $k\neq j$.  Suppose instead that $k=j$.  

By \eqref{eq:sat-compressor-likelihood},
$q_{\boldsymbol\psi}(s^\star\mid\theta_i)=0$ for every $i\neq j$, whereas $q_{\boldsymbol\psi}(s^\star\mid\theta_j)>0$. Since $\mu_0$ has full support, $s^\star$ perfectly reveals $\theta_j$, so \eqref{eq:sat-compressor-bad-state} yields
\[
\frac{W_a(q_{\boldsymbol\psi},s^\star)}
     {\max_{b\in A}W_b(q_{\boldsymbol\psi},s^\star)}
=
\frac{u(a,\theta_j)}{M_{\theta_j}}
<c.
\]
This contradicts the signalwise guarantee \eqref{eq:signalwise-guarantee}.  Thus $k\neq j$: the unique satisfiable formula $\psi_k$ remains after $\psi_j$ is deleted.  This establishes the third case and proves \eqref{eq:sat-compressor-preserves-or}.

\noindent\textbf{Step 3: Use the compressor to decide an arbitrary SAT instance.}

Let
\[
\Phi(y_1,\ldots,y_n)
\]
be an arbitrary Boolean formula.  After assigning the first $k$ variables, a prefix is a string
\[
\sigma=(\sigma_1,\ldots,\sigma_k)\in\{0,1\}^k,
\]
and its residual formula is
\[
\Phi_\sigma(y_{k+1},\ldots,y_n)
:=\Phi(\sigma_1,\ldots,\sigma_k,y_{k+1},\ldots,y_n).
\]
At depth $k$, maintain a list $\mathcal L_k$ of prefixes satisfying the two invariants
\begin{equation}
\label{eq:sat-list-invariant}
|\mathcal L_k|\leq m-1,
\qquad
\Phi\text{ is satisfiable}
\quad\Longleftrightarrow\quad
\text{some }\Phi_\sigma\text{ with }\sigma\in\mathcal L_k
\text{ is satisfiable}.
\end{equation}
Initialize
\[
\mathcal L_0=\{\varepsilon\},
\qquad
\Phi_\varepsilon=\Phi,
\]
The invariant holds initially, and its size bound is valid because $m\geq2$.

Suppose $\mathcal L_k$ has been constructed for some $k<n$, we will construct $\mathcal L_{k+1}$. Let
\[
\widetilde{\mathcal L}_{k+1}
:=
\bigcup_{\sigma\in\mathcal L_k}\{\sigma0,\sigma1\}.
\]
For each prefix $\sigma$,
\[
\Phi_\sigma\text{ is satisfiable}
\quad\Longleftrightarrow\quad
\Phi_{\sigma0}\text{ or }\Phi_{\sigma1}\text{ is satisfiable}.
\]
Thus the expansion preserves the second invariant in \eqref{eq:sat-list-invariant}, and
\[
|\widetilde{\mathcal L}_{k+1}| = 2|\mathcal L_k|
\leq2(m-1).
\]

Now, consider the following procedure to reduce the size of $\widetilde{\mathcal L}_{k+1}$. While $|\widetilde{\mathcal L}_{k+1}|\geq m$, choose any $m$ distinct prefixes
\[
\sigma^{(1)},\ldots,\sigma^{(m)}
\in\widetilde{\mathcal L}_{k+1}
\]
and feed their residual formulas
\[
\psi_i:=\Phi_{\sigma^{(i)}},
\qquad i=1,\ldots,m,
\]
to the compressor constructed in Step 1.  If the compressor returns the deletion index $j$, remove the prefix $\sigma^{(j)}$ from the full list. By Step 2, each deletion preserves the second invariant for the entire list.

Each oracle call removes one prefix.  Starting from at most $2(m-1)$ prefixes, at most
\[
2(m-1)-(m-1)=m-1
\]
calls reduce the list to size at most $m-1$.  Set the resulting list $\mathcal L_{k+1}$.  This completes the inductive construction.

At depth $n$, every prefix in $\mathcal L_n$ is a complete assignment and its residual formula has no remaining variables.  Directly evaluate $\Phi(\sigma)$ for each of the at most $m-1$ surviving assignments.  The invariant \eqref{eq:sat-list-invariant} implies that one of them satisfies $\Phi$ if and only if the original formula is satisfiable.

Finally, this procedure is polynomial time.  There are at most $m-1$ selector calls per variable and hence at most $n(m-1)=O(n)$ calls in total, because $m$ is fixed by the environment.  A residual formula is represented by hard-wiring a prefix into the original formula, so its size remains polynomial in $|\Phi|$.  Each queried sampler contains at most the fixed number $m$ of such formula circuits, at most $n$ fair random bits, and constant-size state and output logic.  Its description can therefore be constructed in polynomial time.  All later queries may depend on earlier outputs of $\alpha$, which is why this is a polynomial-time reduction.

We have shown that a polynomial-time signalwise $c$-selector for $c>\gamma_{\mathrm{sig}}(u)$ would put \textsc{SAT} in $\mathsf{P}$ and imply $\mathsf{P}=\mathsf{NP}$.  Under $\mathsf{P}\neq\mathsf{NP}$, no such selector exists.
\end{proof}

\begin{proof}[Proof of \autoref{thm:ex-ante-approximation-threshold}]
Choose a prior-optimal action
\[
a^0\in\arg\max_{a\in A}
\sum_{\theta\in\Theta}\mu_0(\theta)u(a,\theta)
\]
and let $\alpha^0$ be the constant selector defined by
\[
\alpha^0(\langle q\rangle,s):=a^0.
\]
We know that
\begin{equation}
\label{eq:constant-selector-value}
\operatorname{VAL}_{\alpha^0}(q)=V^{\mathrm{NI}}
\end{equation}
for every $q$.  On the other hand,
\begin{equation}
\label{eq:opt-bounded-by-full-information}
\begin{aligned}
\operatorname{OPT}(q)
&\leq
\sum_s\sum_{\theta\in\Theta}
\mu_0(\theta)q(s\mid\theta)M_\theta\\
&=V^{\mathrm{FI}}.
\end{aligned}
\end{equation}
If $c\leq\gamma_{\mathrm{EA}}(u,\mu_0)$, by equations~\eqref{eq:constant-selector-value} and \eqref{eq:opt-bounded-by-full-information}, 
\[
\operatorname{VAL}_{\alpha^0}(q)
=V^{\mathrm{NI}}
=\gamma_{\mathrm{EA}}(u,\mu_0)V^{\mathrm{FI}}
\geq\gamma_{\mathrm{EA}}(u,\mu_0)\operatorname{OPT}(q)
\geq c\,\operatorname{OPT}(q).
\]
This proves the ``if'' direction.

For the converse, suppose that $c>\gamma_{\mathrm{EA}}(u,\mu_0)$ and that a uniform deterministic polynomial-time ex ante $c$-selector $\alpha$ exists.  Choose a rational number $\bar c$ with
\[
\gamma_{\mathrm{EA}}(u,\mu_0)<\bar c<c.
\]
The selector $\alpha$ is also an ex ante $\bar c$-selector. Define
\[
\delta
:=
\bar cV^{\mathrm{FI}}-V^{\mathrm{NI}}
=
\bigl(\bar c-\gamma_{\mathrm{EA}}(u,\mu_0)\bigr)V^{\mathrm{FI}}
>0.
\]
We use $\alpha$ to give a $\mathsf{BPP}$ algorithm for Statistical Difference, which is a $\mathsf{SZK}$-complete problem \citep{SahaiVadhan2003}.  The reduction constructs a sampler $q$ with the following two properties.  When the two input distributions are far, the signal almost reveals the state, so the approximation guarantee forces $\operatorname{VAL}_\alpha(q)$ to be large.  When they are close, the action chosen by $\alpha$ is almost independent of the state, so $\operatorname{VAL}_\alpha(q)$ is at most $V^{\mathrm{NI}}$ up to a small error. 

\noindent\textbf{Step 1: From Polarized Statistical Difference to construct the sampler.}

For two distributions generated by sampler circuits $P_0$ and $P_1$, write
\[
d_{\mathrm{TV}}(P_0,P_1)
:=
\frac{1}{2}\sum_y
\left|\Pr[P_0=y]-\Pr[P_1=y]\right|.
\]
The standard Statistical Difference problem asks us to distinguish whether 
\[
d_{\mathrm{TV}}(P_0,P_1) >\frac{2}{3} \qquad \text{or} \qquad d_{\mathrm{TV}}(P_0,P_1) <\frac{1}{3} 
\]
in polynomial time. 

Let $t:=|\Theta|$.  If $t=1$, then $V^{\mathrm{NI}}=V^{\mathrm{FI}}$ and $\gamma_{\mathrm{EA}}(u,\mu_0)=1$, which is incompatible with $\bar c<c\leq1$ and $\bar c>\gamma_{\mathrm{EA}}(u,\mu_0)$.  Hence $t\geq2$. Let $\ell:=\lceil\log_2t\rceil$. Assign each state $\theta$ a distinct binary codeword
\[
\kappa(\theta)
=
\bigl(\kappa_1(\theta),\ldots,\kappa_\ell(\theta)\bigr)
\in\{0,1\}^\ell.
\]
Choose a fixed rational $\varepsilon>0$ so small that
\[
\ell\varepsilon(\bar c+1)V^{\mathrm{FI}}
<\frac{\delta}{2}.
\]
Apply Polarization Lemma \citep[Lemma 3.3]{SahaiVadhan2003} and relabel the resulting sampler circuits as $P_0$ and $P_1$.  For this choice of $\varepsilon$, it suffices to distinguish the promise
\[
d_{\mathrm{TV}}(P_0,P_1)\leq\varepsilon
\qquad\text{or}\qquad
d_{\mathrm{TV}}(P_0,P_1)\geq1-\varepsilon.
\]
We call these the close and far cases, respectively.

From $P_0$ and $P_1$, construct a sampler $q$ that, in state $\theta$, independently draws $\ell$ samples and outputs
\[
S=(Y_1,\ldots,Y_\ell),
\qquad
Y_k\sim
\begin{cases}
P_0, & \kappa_k(\theta)=0,\\
P_1, & \kappa_k(\theta)=1
\end{cases}
\qquad(k=1,\ldots,\ell).
\]
Since the environment and $\ell$ are fixed, the circuit for $q$ has size polynomial in the descriptions of $P_0$ and $P_1$.

\noindent\textbf{Step 2: In the far case, the optimum is close to full information.}

Suppose that $d_{\mathrm{TV}}(P_0,P_1)\geq1-\varepsilon$.  Let
\[
E:=\{y:\Pr[P_0=y]\geq\Pr[P_1=y]\}.
\]
By the property of total variation distance, we have that 
\[
P_0(E)-P_1(E)
=
d_{\mathrm{TV}}(P_0,P_1).
\]
Therefore, 
\[
P_0(E^c)+P_1(E)
=
1-d_{\mathrm{TV}}(P_0,P_1)
\leq\varepsilon.
\]

Given a signal $S = (Y_1,\ldots,Y_\ell)$, construct
\[
\widehat b_k
:=
\begin{cases}
0, & Y_k\in E,\\
1, & Y_k\notin E.
\end{cases}
\]
Conditional on state $\theta$,
\[
\Pr\!\left[\widehat b_k\neq\kappa_k(\theta)\mid\theta\right]
=
\begin{cases}
P_0(E^c) & \kappa_k(\theta)=0,\\
P_1(E) & \kappa_k(\theta)=1
\end{cases}
\leq\varepsilon.
\]
By union bound, we have 
\[
\Pr\!\left[
(\widehat b_1,\ldots,\widehat b_\ell)=\kappa(\theta)
\mid\theta
\right]
\geq1-\ell\varepsilon.
\]

We now construct a possibly inefficient response rule $\beta$.  Fix
\[
a^\star_\theta\in\arg\max_{a\in A}u(a,\theta)
\qquad\text{for every }\theta,
\]
and fix an arbitrary fallback action $a^\circ\in A$.  Given a signal $S=(Y_1,\ldots,Y_\ell)$, first compute the decoded string $\widehat b=(\widehat b_1,\ldots,\widehat b_\ell)$ and then set
\[
\beta(S)
:=
\begin{cases}
a^\star_{\widehat\theta},
  & \widehat b=\kappa(\widehat\theta)
    \text{ for some }\widehat\theta\in\Theta,\\
a^\circ,
  & \widehat b\notin\kappa(\Theta).
\end{cases}
\]
The codewords are distinct, so the state $\widehat\theta$ in the first line is unique.  If the true state is $\theta$, then on the event $\widehat b=\kappa(\theta)$ the rule returns $a^\star_\theta$ and obtains $M_\theta$.  Dropping the positive payoff on the complementary event gives
\[
\mathbb E_{S\sim q(\cdot\mid\theta)}
\bigl[u(\beta(S),\theta)\bigr]
\geq
(1-\ell\varepsilon)M_\theta.
\]
Because the unrestricted Bayes optimum maximizes signal by signal, it weakly dominates $\beta$:
\begin{equation}
\label{eq:far-case-optimum}
\begin{aligned}
\operatorname{OPT}(q)
&=\sum_s\max_{a\in A}W_a(q,s)\\
&\geq\sum_s W_{\beta(s)}(q,s)\\
&=\sum_{\theta\in\Theta}\mu_0(\theta)
  \mathbb E_{S\sim q(\cdot\mid\theta)}
  \bigl[u(\beta(S),\theta)\bigr]\\
&\geq(1-\ell\varepsilon)
  \sum_{\theta\in\Theta}\mu_0(\theta)M_\theta\\
&=(1-\ell\varepsilon)V^{\mathrm{FI}}.
\end{aligned}
\end{equation}
The $\bar c$-guarantee for $\alpha$ now implies
\[
\operatorname{VAL}_\alpha(q)
\geq
\bar c\,\operatorname{OPT}(q)
\geq
\bar c(1-\ell\varepsilon)V^{\mathrm{FI}}.
\]

\noindent\textbf{Step 3: In the close case, the selector acts nearly no-information.}

Suppose instead that $d_{\mathrm{TV}}(P_0,P_1)\leq\varepsilon$.  Let
\[
Q_\theta
:=
\bigotimes_{k=1}^\ell P_{\kappa_k(\theta)}
\qquad\text{and}\qquad
\overline{Q}:=P_0^{\otimes\ell}.
\]
These are, respectively, the signal distribution in state $\theta$ and a common reference distribution. By the product inequality for total variation distance \citep[Lemma 3.6.5]{Durrett2019},
\[
d_{\mathrm{TV}}(Q_\theta,\overline{Q})
\leq
\sum_{k:\,\kappa_k(\theta)=1}
d_{\mathrm{TV}}(P_1,P_0)
\leq\ell\varepsilon.
\]

For the fixed sampler $q$, write
\[
f_q(S):=\alpha(\langle q\rangle,S).
\]
Let $\pi_\theta$ be the distribution of $f_q(S)$ under $S\sim Q_\theta$, and let $\overline{\pi}$ be its distribution under $S\sim\overline{Q}$. By data-processing inequality,
\[
d_{\mathrm{TV}}(\pi_\theta,\overline{\pi})
\leq
d_{\mathrm{TV}}(Q_\theta,\overline Q)
\leq \ell\varepsilon.
\]
Since $0<u(a,\theta)\leq M_\theta$, it follows state by state that
\[
\mathbb E_{a\sim\pi_\theta}[u(a,\theta)]
\leq
\mathbb E_{a\sim\overline{\pi}}[u(a,\theta)]
+\ell\varepsilon M_\theta.
\]
Consequently,
\begin{equation}
\label{eq:close-case-selector-value}
\begin{aligned}
\operatorname{VAL}_\alpha(q)
&=\sum_{\theta\in\Theta}\mu_0(\theta)
  \mathbb E_{a\sim\pi_\theta}[u(a,\theta)]\\
&\leq
\sum_{a\in A}\overline{\pi}(a)
  \sum_{\theta\in\Theta}\mu_0(\theta)u(a,\theta)
  +\ell\varepsilon V^{\mathrm{FI}}\\
&\leq
V^{\mathrm{NI}}+\ell\varepsilon V^{\mathrm{FI}}.
\end{aligned}
\end{equation}
The last inequality holds because the first term is a convex combination, with weights $\overline{\pi}(a)$, of the prior payoffs of fixed actions, each of which is at most $V^{\mathrm{NI}}$.

\noindent\textbf{Step 4: Obtain a fixed gap between the two cases.}

Define the far-case lower bound and the close-case upper bound by
\[
\underline{V}_{\mathrm{far}}
:=
\bar c(1-\ell\varepsilon)V^{\mathrm{FI}},
\qquad
\overline{V}_{\mathrm{close}}
:=
V^{\mathrm{NI}}+\ell\varepsilon V^{\mathrm{FI}}.
\]
Their difference is
\[
\underline{V}_{\mathrm{far}}-\overline{V}_{\mathrm{close}}
=\delta-\ell\varepsilon(\bar c+1)V^{\mathrm{FI}}
>\frac{\delta}{2}>0.
\]
Thus $\operatorname{VAL}_\alpha(q)\geq\underline{V}_{\mathrm{far}}$ in the far case, whereas $\operatorname{VAL}_\alpha(q)\leq\overline{V}_{\mathrm{close}}$ in the close case.  The midpoint
\[
\tau
:=
\frac{\underline{V}_{\mathrm{far}}+\overline{V}_{\mathrm{close}}}{2}
\]
separates the cases with margin greater than $\delta/4$.

\noindent\textbf{Step 5: Estimate the value and decide Statistical Difference.}

It remains only to estimate $\operatorname{VAL}_\alpha(q)$ without enumerating the signal space.  For a positive integer $N_{\mathrm{samp}}$ to be chosen below and for each state $\theta$, draw $S_{\theta,1},\ldots,S_{\theta,N_{\mathrm{samp}}}$ independently from $q(\cdot\mid\theta)$ and define the true and empirical statewise payoffs
\[
v_\theta
:=
\mathbb E_{S\sim q(\cdot\mid\theta)}
\!\left[u\!\left(\alpha(\langle q\rangle,S),\theta\right)\right],
\qquad
\widehat v_\theta
:=
\frac{1}{N_{\mathrm{samp}}}
\sum_{r=1}^{N_{\mathrm{samp}}}
u\!\left(\alpha(\langle q\rangle,S_{\theta,r}),\theta\right).
\]
Let
\[
\widehat V
:=
\sum_{\theta\in\Theta}\mu_0(\theta)\widehat v_\theta.
\]
Since
\[
\operatorname{VAL}_\alpha(q)
=
\sum_{\theta\in\Theta}\mu_0(\theta)v_\theta,
\]
the estimator $\widehat V$ is unbiased.  Let
\[
u_{\max}
:=
\max_{a\in A,\,\theta\in\Theta}u(a,\theta).
\]
For every fixed $\theta$, the $N_{\mathrm{samp}}$ realized payoffs in $\widehat v_\theta$ are independent random variables in $[0,u_{\max}]$.  By Hoeffding's inequality,
\[
\Pr\!\left[
\left|\widehat v_\theta-v_\theta\right|
\geq\frac{\delta}{8}
\right]
\leq
2\exp\!\left(
-\frac{N_{\mathrm{samp}}\delta^2}{32u_{\max}^2}
\right).
\]
Because $t=|\Theta|$, a union bound over the states yields
\[
\Pr\!\left[
\max_{\theta\in\Theta}
\left|\widehat v_\theta-v_\theta\right|
\geq\frac{\delta}{8}
\right]
\leq
2t\exp\!\left(
-\frac{N_{\mathrm{samp}}\delta^2}{32u_{\max}^2}
\right).
\]
Choose
\[
N_{\mathrm{samp}}
:=
\left\lceil
\frac{32u_{\max}^2}{\delta^2}\ln(6t)
\right\rceil.
\]
The preceding failure probability is then at most $1/3$.  On the complementary event,
\[
\left|\widehat V-\operatorname{VAL}_\alpha(q)\right|
\leq
\sum_{\theta\in\Theta}\mu_0(\theta)
\left|\widehat v_\theta-v_\theta\right| \leq 
\max_{\theta\in\Theta}
\left|\widehat v_\theta-v_\theta\right| <\frac{\delta}{8}.
\]
Thus $\widehat V$ has the required accuracy with probability at least $2/3$.
Since $\delta/8<\delta/4$, the midpoint comparison from Step~4 then correctly separates the far and close cases.

Finally, $t$, $u_{\max}$, and $\delta$ depend only on the fixed environment and the fixed factor $\bar c$, not on the Statistical Difference input.  Hence $N_{\mathrm{samp}}$ is a fixed integer, and the estimator makes $tN_{\mathrm{samp}}$ sampler calls and the same number of selector calls.  Each draw from $q$ runs the fixed number $\ell$ of polarized sampler circuits, and each call to the uniform selector $\alpha$ takes polynomial time in the description of $q$ and the sampled signal.  The fixed rational payoffs and prior weights add only polynomial-time arithmetic.  Therefore computing $\widehat V$ and declaring the instance far when $\widehat V>\tau$ and close otherwise is a uniform randomized polynomial-time algorithm for polarized Statistical Difference with bounded error at most $\frac{1}{3}$.

Therefore, this algorithm places Statistical Difference in $\mathsf{BPP}$.  By $\mathsf{SZK}$-completeness, $\mathsf{SZK}\subseteq\mathsf{BPP}$.  Since it is well known that $\mathsf{BPP}\subseteq\mathsf{SZK}$, we get $\mathsf{SZK}=\mathsf{BPP}$, contradicting the assumption.
\end{proof}

\begin{proof}[Proof of \autoref{thm:pac-ex-ante-threshold}]
If $c\leq\gamma_{\mathrm{EA}}(u,\mu_0)$, the prior-optimal constant selector $\alpha^0$ from the proof of \autoref{thm:ex-ante-approximation-threshold} remains valid and completes the proof of this part. 

Conversely, suppose that $c>\gamma_{\mathrm{EA}}(u,\mu_0)$ and that a uniform polynomial-time randomized $(c,\delta)$-PAC ex ante selector $\alpha$ exists.  Choose a rational $\bar c\in(\gamma_{\mathrm{EA}}(u,\mu_0),c)$ and set
\[
\Delta:=\bar cV^{\mathrm{FI}}-V^{\mathrm{NI}}>0.
\]
Let $t:=|\Theta|$ and $\ell:=\lceil\log_2t\rceil$; as in the preceding proof, the present case implies $t\geq2$.  Choose a fixed rational $\varepsilon>0$ such that
\[
\ell\varepsilon(\bar c+1)V^{\mathrm{FI}}<\frac{\Delta}{2}.
\]
Apply exactly the polarization, state encoding, and sampler construction from Step~1 of the proof of \autoref{thm:ex-ante-approximation-threshold}.  Thus the polarized circuits satisfy
\[
d_{\mathrm{TV}}(P_0,P_1)\leq\varepsilon
\qquad\text{or}\qquad
d_{\mathrm{TV}}(P_0,P_1)\geq1-\varepsilon,
\]
and the resulting sampler $q$ has polynomial description size.

In the far case, the decoding argument remains valid and we can get that by \eqref{eq:far-case-optimum},
\[
\operatorname{OPT}(q)\geq(1-\ell\varepsilon)V^{\mathrm{FI}}.
\]
By $\bar c$-approximation, the PAC guarantee implies
\[
\Pr_R\!\left[
\operatorname{VAL}_{\alpha_R}(q)\geq
\bar c(1-\ell\varepsilon)V^{\mathrm{FI}}
\right]
\geq1-\delta.
\]
Set $F:=\bar c(1-\ell\varepsilon)V^{\mathrm{FI}}$.

In the close case, fix an arbitrary tape $R$, including one on which the PAC guarantee fails.  The rule $\alpha_R$ is a deterministic total selector, so \eqref{eq:close-case-selector-value} still applies. Hence, for every $R$,
\[
\operatorname{VAL}_{\alpha_R}(q)
\leq V^{\mathrm{NI}}+\ell\varepsilon V^{\mathrm{FI}}
=:C.
\]
Consequently,
\[
G:=F-C
=\Delta-\ell\varepsilon(\bar c+1)V^{\mathrm{FI}}
>\frac{\Delta}{2}>0.
\]
The new issue is that the far-case lower bound holds only for a $1-\delta$ fraction of random tapes, whereas the close-case upper bound holds pointwise for every tape.

To handle this difference, let
\[
m:=\left\lceil\frac{\ln 9}{\ln(1/\delta)}\right\rceil
\]
and draw independent tapes $R_1,\ldots,R_m$.  In the far case, with probability at least $1-\delta^m\geq8/9$, at least one of the fixed-tape rules $\alpha_{R_i}$ has value at least $F$; in the close case, every one of them has value at most $C$.

Similar to Step 5 of the proof of \autoref{thm:ex-ante-approximation-threshold}, let
\[
u_{\max}:=\max_{a\in A,\,\theta\in\Theta}u(a,\theta),
\qquad
N_{\mathrm{val}}
:=\left\lceil
\frac{8u_{\max}^2}{G^2}\ln(18mt)
\right\rceil.
\]
For each $i$ and $\theta$, draw independent samples $S_{i,\theta,1},\ldots,S_{i,\theta,N_{\mathrm{val}}}\sim q(\cdot\mid\theta)$, independently of the tapes, and define
\[
\widehat V_i
:=
\sum_{\theta\in\Theta}\mu_0(\theta)
\frac{1}{N_{\mathrm{val}}}
\sum_{r=1}^{N_{\mathrm{val}}}
u\!\left(
\alpha_{R_i}(\langle q\rangle,S_{i,\theta,r}),\theta
\right).
\]
Conditional on the tapes, Hoeffding's inequality and a union bound over the $mt$ tape--state pairs give
\[
\Pr\!\left[
\max_{1\leq i\leq m}
\left|\widehat V_i-\operatorname{VAL}_{\alpha_{R_i}}(q)\right|
<\frac{G}{4}
\,\middle|\,
R_1,\ldots,R_m
\right]
\geq\frac89.
\]

Let $\tau:=(F+C)/2$ and declare the polarized instance far if and only if $\max_i\widehat V_i>\tau$.  In the far case, a successful tape together with accurate validation gives
\[
\max_i\widehat V_i
\geq F-\frac G4
=\tau+\frac G4
>\tau;
\]
this joint event has probability at least $1-1/9-1/9=7/9$.  In the close case, the pointwise bound for every tape and accurate validation give
\[
\max_i\widehat V_i
\leq C+\frac G4
=\tau-\frac G4
<\tau
\]
with probability at least $8/9$.

The analysis of the size and time complexity of this reduction is the same as the proof of \autoref{thm:ex-ante-approximation-threshold}. The above procedure is therefore a randomized polynomial-time algorithm for polarized Statistical Difference with bounded error, which contracts to $\mathsf{SZK}\neq \mathsf{BPP}$.
\end{proof}

\begin{proof}[Proof of \autoref{prop:finite-support-pac-upper}]
For a policy $h:\mathcal S_q\to A$, define its empirical ex ante value by
\[
\widehat{\operatorname{VAL}}_N(h)
:=
\sum_{\theta\in\Theta}\mu_0(\theta)
\frac{1}{N}\sum_{i=1}^N
u\bigl(h(S_{\theta,i}),\theta\bigr).
\]

Before the main proof, we firstly show the following lemma.
\begin{lemma}
\label{lem:empirical-bayes-erm}
For every realization of the training sample, the restriction of $\widehat\alpha_N$ to $\mathcal S_q$ satisfies
\[
\widehat\alpha_N
\in
\operatorname*{arg\,max}_{h\in\mathcal H_q}
\widehat{\operatorname{VAL}}_N(h),
\qquad
\mathcal H_q:=A^{\mathcal S_q}.
\]
\end{lemma}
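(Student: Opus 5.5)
The plan is to rewrite the empirical value of an arbitrary policy as a sum over signals of the empirical unnormalized payoffs $\widehat W$. Once it is in that form, the maximization separates signal by signal. The empirical Bayes rule maximizes each summand, so it maximizes the whole sum.

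First, fix a realization of the training sample and any $h\in\mathcal H_q=A^{\mathcal S_q}$. Each $S_{\theta,i}$ lies in $\mathcal S_q$, since it is drawn from $q(\cdot\mid\theta)$, so $h(S_{\theta,i})$ is well defined. Grouping the inner sum by the realized signal value gives
\[
\frac1N\sum_{i=1}^N u\bigl(h(S_{\theta,i}),\theta\bigr)
=\sum_{s\in\mathcal S_q}\widehat q_N(s\mid\theta)\,u\bigl(h(s),\theta\bigr).
\]
Because $K_q<\infty$, all sums are finite and we may exchange them. This yields
\[
\widehat{\operatorname{VAL}}_N(h)=\sum_{s\in\mathcal S_q}\widehat W_{h(s)}(s),
\]
with $\widehat W$ as in \eqref{eq:empirical-unnormalized-payoff}.

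Next, the $h(s)$ can be chosen independently across $s$, since $\mathcal H_q$ is the full product set. Hence
\[
\max_{h\in\mathcal H_q}\widehat{\operatorname{VAL}}_N(h)=\sum_{s\in\mathcal S_q}\max_{a\in A}\widehat W_a(s).
\]
A policy attains this maximum if and only if $h(s)\in\arg\max_a\widehat W_a(s)$ for every $s\in\mathcal S_q$. I then check that $\widehat\alpha_N$ satisfies this at every $s$, in two cases:
\begin{enumerate}
\item If some empirical count at $s$ is positive, $\widehat\alpha_N(s)$ is defined as an empirical argmax, with the tie-breaking order only selecting among maximizers.
\item If all counts at $s$ are zero, then $\widehat q_N(s\mid\theta)=0$ for all $\theta$. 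So $\widehat W_a(s)=0$ for every $a$, every action is a maximizer, and in particular the fallback $a^\circ$ is.
\end{enumerate}

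The argument is elementary. The main point of care is the fallback case, together with checking that the regrouping is legitimate: all training signals lie in $\mathcal S_q$, and finiteness of $\mathcal S_q$ makes the exchange of sums trivial. No probabilistic input is needed, since the statement holds for every realization of the training sample.
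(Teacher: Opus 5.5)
Your proposal is correct and follows essentially the same route as the paper: regroup $\widehat{\operatorname{VAL}}_N(h)$ as $\sum_{s\in\mathcal S_q}\widehat W_{h(s)}(s)$, observe that the maximization separates signal by signal, and note that $\widehat\alpha_N$ maximizes each summand. Your explicit treatment of the zero-count fallback case, where $\widehat W_a(s)=0$ for every $a$ and so $a^\circ$ is a maximizer, makes precise a step the paper covers only by remarking that unsampled signals contribute zero.
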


\begin{proof}[Proof of \autoref{lem:empirical-bayes-erm}]
Regrouping the empirical objective by signal gives
\[
\widehat{\operatorname{VAL}}_N(h)
=
\sum_{\theta\in\Theta}\sum_{s\in\mathcal S_q}
\mu_0(\theta)\widehat q_N(s\mid\theta)u(h(s),\theta)
=
\sum_{s\in\mathcal S_q}\widehat W_{h(s)}(s).
\]
Only signals appearing in the finite training sample have a nonzero summand.  The choice of $h(s)$ therefore separates signal by signal, and $\widehat\alpha_N(s)$ maximizes each summand.  Hence it maximizes their sum.
\end{proof}

Let
\[
r_c:=\frac{(1-c)V^{\mathrm{NI}}}{2}.
\]
Fix $h\in\mathcal H_q$.  The variables
\[
X^h_{\theta,i}
:=
\frac{\mu_0(\theta)}{N}
u\bigl(h(S_{\theta,i}),\theta\bigr),
\qquad
\theta\in\Theta,\quad i\in\{1,\ldots,N\},
\]
are independent, satisfy
\[
0<X^h_{\theta,i}
\leq\frac{\mu_0(\theta)U}{N},
\]
and obey
\[
\sum_{\theta,i}X^h_{\theta,i}
=\widehat{\operatorname{VAL}}_N(h),
\qquad
\mathbb E\!\left[\sum_{\theta,i}X^h_{\theta,i}\right]
=\operatorname{VAL}_h(q).
\]
Moreover, the sum of the squared interval lengths is
\[
\sum_{\theta\in\Theta}\sum_{i=1}^N
\left(\frac{\mu_0(\theta)U}{N}\right)^2
=
\frac{U^2C_\mu}{N}.
\]
By Hoeffding's inequality,
\[
\Pr\!\left[
\left|
\widehat{\operatorname{VAL}}_N(h)-\operatorname{VAL}_h(q)
\right|
\geq t
\right]
\leq
2\exp\!\left(-\frac{2Nt^2}{U^2C_\mu}\right).
\]
Since $|\mathcal H_q|=|A|^{K_q}$, substituting $t=r_c$ and taking a union bound yields
\[
\Pr\!\left[
\sup_{h\in\mathcal H_q}
\left|
\widehat{\operatorname{VAL}}_N(h)-\operatorname{VAL}_h(q)
\right|
>r_c
\right]
\leq
2\exp\!\left(
K_q\log|A|
-\frac{N(1-c)^2(V^{\mathrm{NI}})^2}{2U^2C_\mu}
\right)
\leq\delta.
\]
On the complementary event, let $\alpha^*$ be a Bayes-optimal policy, so that $\operatorname{VAL}_{\alpha^*}(q)=\operatorname{OPT}(q)$.  By \autoref{lem:empirical-bayes-erm},
\[
\frac{\operatorname{VAL}_{\widehat\alpha_N}(q)}{\operatorname{OPT}(q)}
\geq
\frac{\widehat{\operatorname{VAL}}_N(\widehat\alpha_N)-r_c}
{\operatorname{OPT}(q)}
\geq
\frac{\widehat{\operatorname{VAL}}_N(\alpha^*)-r_c}
{\operatorname{OPT}(q)}
\geq
1-\frac{2r_c}{\operatorname{OPT}(q)}
\geq
1-\frac{2r_c}{V^{\mathrm{NI}}}
=c.
\]
\end{proof}

\section{Sample Lower Bounds}
\label{appendix:monte-carlo-lower-bounds}

Consider the fixed two-state matching environment
\begin{equation}
\label{eq:hidden-sign-environment}
\Theta=A=\{-1,+1\},
\qquad
\mu_0(-1)=\mu_0(+1)=\frac12,
\qquad
u(a,\theta)=\frac{1+\mathbf 1\{a=\theta\}}{2}.
\end{equation}
Let $K=2m$ be even, and let the signal set be
\[
\mathcal S_K
:=
\{(j,x):j\in\{1,\ldots,m\},\ x\in\{-1,+1\}\}.
\]
For a hidden vector $b=(b_1,\ldots,b_m)\in\{-1,+1\}^m$ and $\gamma\in(0,1/2]$, define
\[
q_b((j,x)\mid\theta)
:=
\frac{1+\theta\gamma b_jx}{K}.
\]
This is a valid experiment.  Indeed, all probabilities are positive and, for every $\theta$,
\[
\sum_{j=1}^m\sum_{x\in\{-1,+1\}}
q_b((j,x)\mid\theta)
=
\frac{2m}{K}=1.
\]
The marginal probability of each signal is
\[
Z(q_b,(j,x))
=
\frac12\frac{1+\gamma b_jx}{K}
+
\frac12\frac{1-\gamma b_jx}{K}
=
\frac1K.
\]
By Bayes' rule,
\[
\Pr(\theta=t\mid j,x)
=
\frac{1+t\gamma b_jx}{2},
\qquad t\in\{-1,+1\},
\]
and hence
\[
\mathbb E[\theta\mid j,x]
=
\gamma b_jx.
\]
Thus the posterior-optimal action is $b_jx$.  Its posterior payoff is $(3+\gamma)/4$, whereas the payoff of the other action is $(3-\gamma)/4$.  In particular,
\[
\operatorname{OPT}(q_b)=\frac{3+\gamma}{4}.
\]
For every deterministic policy $h$, define its fraction of incorrect signalwise choices by
\[
e_b(h)
:=
\frac{1}{K}
\sum_{j=1}^m\sum_{x\in\{-1,+1\}}
\mathbf 1\{h(j,x)\neq b_jx\}.
\]
Then
\begin{equation}
\label{eq:hidden-sign-value-ratio}
\frac{\operatorname{VAL}_h(q_b)}{\operatorname{OPT}(q_b)}
= \frac{\frac{3+\gamma}{4}-\frac{\gamma}{2}e_b(h)}{\frac{3+\gamma}{4}} =
1-\frac{2\gamma}{3+\gamma}e_b(h).
\end{equation}

A \emph{fixed-training-sample forward-sampling selector} with sample size $N$ is a randomized learning rule that observes, for each state $\theta$, $N$ independent draws from $q(\cdot\mid\theta)$ and has no other access to the experiment.  From this training transcript, including any auxiliary randomness, it produces a deterministic policy $\widehat\alpha$.  The transcript is held fixed throughout evaluation: on receiving a query signal, the policy makes no additional calls to the sampler.

\begin{proposition}[Monte Carlo $(c,\delta)$-PAC sample lower bound]
\label{prop:monte-carlo-pac-lower}
There is a universal constant $C_0>0$ with the following property.  For every even $K\geq2$, every $c\in[31/32,1)$, and every $\delta\in(0,1/64]$, set $\gamma:=16(1-c)$.  Any fixed-training-sample forward-sampling selector that uses $N$ independent samples from each state in \eqref{eq:hidden-sign-environment} and satisfies the $(c,\delta)$-PAC guarantee
\[
\Pr\!\left[
\operatorname{VAL}_{\widehat\alpha}(q_b)
\geq
c\,\operatorname{OPT}(q_b)
\right]
\geq1-\delta
\]
uniformly over $b\in\{-1,+1\}^{K/2}$ must have
\begin{equation}
\label{eq:batch-sample-lower}
N
\geq
C_0
\frac{K+\log(1/\delta)}{(1-c)^2}.
\end{equation}
\end{proposition}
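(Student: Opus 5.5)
The plan is to prove the two terms of \eqref{eq:batch-sample-lower} separately, an information-theoretic bound of order $K/\gamma^2$ and a two-point bound of order $\log(1/\delta)/\gamma^2$. Then $N\geq\max\{\cdot,\cdot\}\geq\tfrac12(\text{sum})$ and $\gamma^2=256(1-c)^2$ give the claim. Both parts rest on a translation of the PAC guarantee into a Hamming-error requirement using \eqref{eq:hidden-sign-value-ratio}. The event $\operatorname{VAL}_{\widehat\alpha}(q_b)\geq c\operatorname{OPT}(q_b)$ is equivalent to
\[
e_b(\widehat\alpha)\leq\rho:=\frac{(1-c)(3+\gamma)}{2\gamma}=\frac{3+\gamma}{32}\leq\frac{7}{64},
\]
using $\gamma=16(1-c)\leq 1/2$ for $c\geq31/32$. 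So with probability at least $1-\delta$ the learned policy misclassifies at most a $7/64$ fraction of the $K$ signals.

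\textbf{Per-sample KL bound.} This is the basic estimate used in both parts. Fix two sign vectors $b,b'$ differing on a set $D$ of coordinates. For either state $\theta$, the laws $q_b(\cdot\mid\theta)$ and $q_{b'}(\cdot\mid\theta)$ agree outside the blocks $j\in D$. On such a block they are $(1\pm\gamma)/K$ versus $(1\mp\gamma)/K$. Bounding KL by $\chi^2$ gives
\[
\mathrm{KL}\le \frac{2|D|}{K}\cdot\frac{(2\gamma)^2}{1-\gamma}\le C_1\gamma^2\frac{|D|}{K},
\]
with $C_1$ absolute since $\gamma\leq1/2$. The training transcript has $N$ draws from each of two states, so the total divergence is at most $2NC_1\gamma^2|D|/K$, and any auxiliary randomness is common to both hypotheses. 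I expect this chi-square/KL bookkeeping, plus making sure the $\gamma\le 1/2$ regime keeps the constant uniform, to be routine.

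\textbf{The $K$ term (Fano).} This is the main step. By Gilbert--Varshamov, pick $\mathcal B\subseteq\{-1,+1\}^m$ with $|\mathcal B|\geq 2^{m/8}$ and pairwise Hamming distance above $m/4$. A pair at distance $d$ disagrees on a $d/m$ fraction of signals. If $e_b(\widehat\alpha)\leq\rho<1/8$, the triangle inequality on signal-disagreement fractions shows that $b$ is the unique codeword within $\rho$ of $\widehat\alpha$. So the PAC selector yields an estimator of a uniformly drawn $b\in\mathcal B$ that is correct with probability $\geq1-\delta\geq 63/64$. Fano's inequality then forces
\[
I(b;\text{transcript})\geq \tfrac{63}{64}\log|\mathcal B|-\log2\gtrsim m=K/2.
\]
Bounding the mutual information by the maximal pairwise KL, at most $2NC_1\gamma^2$ since $|D|\le m$, gives $N\gtrsim K/\gamma^2$. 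The only delicate point is keeping all constants absolute, including the condition $\rho<1/8$ and the Fano constant when $K$ is small. For very small $K$ the two-point part already dominates, so one can take $K\geq K_0$ here without loss.

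\textbf{The $\log(1/\delta)$ term (two-point).} Take $b$ and $b'=-b$. The success events are disjoint: $e_b+e_{-b}=1$ while $2\rho<1$. Hence for the two transcript laws $P,P'$, we have $P(E)\geq1-\delta$ and $P'(E^c)\ge 1-\delta$ for $E=\{e_b(\widehat\alpha)\leq\rho\}$. Bretagnolle--Huber gives $2\delta\geq P(E^c)+P'(E)\geq\tfrac12\exp(-\mathrm{KL}(P\|P'))$. With $\mathrm{KL}\leq 2NC_1\gamma^2\cdot\tfrac{m}{K}=NC_1\gamma^2$ this yields $N\geq\log(1/(4\delta))/(C_1\gamma^2)$. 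Finally, $\delta\leq1/64$ gives $\log(1/(4\delta))\ge\tfrac23\log(1/\delta)$, which finishes the bound.
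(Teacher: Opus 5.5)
Your proposal is correct. The confidence term is essentially the paper's Step~3: the antipodal pair $b,-b$, disjoint success events from $e_b+e_{-b}=1$ and $2\rho<1$, then Bretagnolle--Huber. The only difference there is that you bound the per-block divergence through $\chi^2$, whereas the paper computes it exactly as $\frac{2\gamma}{K}\log\frac{1+\gamma}{1-\gamma}$ and then uses $\log\frac{1+\gamma}{1-\gamma}\le4\gamma$.

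The genuine difference is the $K$ term, where the paper uses Assouad's lemma rather than Fano. It decodes coordinatewise via $\widehat b_j:=\widehat\alpha(j,+1)$ and notes that $e_b(\widehat\alpha)\ge d_{\mathrm H}(\widehat b,b)/K$. It then needs only the divergence between Hamming neighbors $b,b^{(j)}$, which is at most $16N\gamma^2/K$. Pinsker gives $d_{\mathrm{TV}}\le1/4$ whenever $N\le K/(128\gamma^2)$, so averaging over a uniform $b$ forces $\mathbb E[e_b(\widehat\alpha)]\ge3/16$. This contradicts the bound $\mathbb E[e_b(\widehat\alpha)]\le7/64+\delta\le1/8$ implied by the PAC guarantee.

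Your Gilbert--Varshamov/Fano route instead reads the PAC guarantee as exact identification of a well-separated codeword with probability at least $1-\delta$. You then control the mutual information by the divergence between far-apart codewords, which is of order $N\gamma^2|D|/K\lesssim N\gamma^2$.

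Your route uses the high-probability guarantee as it stands, and this step works for any $\delta$ bounded away from $1$ once $K$ is large. The cost is a packing lemma and a case split for bounded $K$, where the $-\log2$ in Fano swamps $\log|\mathcal B|$. You handle that split correctly, since $\log(1/\delta)\ge\log64$ lets the two-point bound absorb $K\le K_0$ after shrinking $C_0$.

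The paper's Assouad argument is more elementary: it needs only neighbor divergences and holds verbatim for every even $K\ge2$. The price is passing to expectations, which is why it needs $\delta$ small enough that $7/64+\delta<3/16$.
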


\begin{proof}[Proof of \autoref{prop:monte-carlo-pac-lower}]
Since $c\geq31/32$, we have $\gamma=16(1-c)\leq1/2$.  We prove the $K$ and confidence terms separately.

\noindent\textbf{Step 1: Bound the divergence between neighboring hidden vectors.}

Let $b^{(j)}$ be obtained from $b$ by flipping only coordinate $j$.  For either fixed state $\theta$, the two conditional signal distributions differ only on $(j,-1)$ and $(j,+1)$. We can directly calculate the KL divergence,
\begin{equation}
\label{eq:hidden-sign-one-sample-kl}
D_{\mathrm{KL}}\!\left(
q_b(\cdot\mid\theta)
\,\middle\|\,
q_{b^{(j)}}(\cdot\mid\theta)
\right)
=
\frac{2\gamma}{K}
\log\frac{1+\gamma}{1-\gamma}
\leq
\frac{8\gamma^2}{K}.
\end{equation}
The inequality uses $\log((1+\gamma)/(1-\gamma))\leq4\gamma$ for $\gamma\leq1/2$.  For hidden vector $b$, let the complete training transcript be
\[
T_b
:=
\left(
R,
(S_{\theta,i})_{\theta\in\{-1,+1\},\,1\leq i\leq N}
\right),
\]
where $S_{\theta,1},\ldots,S_{\theta,N}$ are independent draws from $q_b(\cdot\mid\theta)$ and $R$ contains any auxiliary randomness used to form the fixed policy.  The random variable $R$ is independent of the training observations and has the same distribution under every hidden vector.  Let $\mathbb P_b$ denote the distribution of $T_b$.  The resulting fixed policy is a deterministic function of $T_b$.  Since $R$ contributes zero relative entropy, additivity over the independent observations and \eqref{eq:hidden-sign-one-sample-kl} give
\begin{equation}
\label{eq:hidden-sign-neighbor-transcript-kl}
\begin{aligned}
D_{\mathrm{KL}}(\mathbb P_b\|\mathbb P_{b^{(j)}})
&=
\sum_{\theta\in\{-1,+1\}}
N D_{\mathrm{KL}}\!\left(
q_b(\cdot\mid\theta)
\,\middle\|\,
q_{b^{(j)}}(\cdot\mid\theta)
\right)\\
&\leq
2N\frac{8\gamma^2}{K}
=
\frac{16N\gamma^2}{K}.
\end{aligned}
\end{equation}

\noindent\textbf{Step 2: Obtain the bound for $K$.}

From the induced deterministic selector define an estimator of the hidden vector by
\[
\widehat b_j:=\widehat\alpha(j,+1),
\qquad j=1,\ldots,m.
\]
Every coordinate error $\widehat b_j\neq b_j$ makes the policy choose the wrong action at signal $(j,+1)$.  Therefore
\begin{equation}
\label{eq:error-dominates-hamming}
e_b(\widehat\alpha)
\geq
\frac{1}{K}d_{\mathrm H}(\widehat b,b)
\end{equation}
where $d_{\mathrm H}(\widehat b,b)$ is the Hamming distance between $\widehat b$ and $b$. 

Let $b$ be uniformly distributed over the hypercube of hidden vectors.  Pairing $b$ with $b^{(j)}$ and using the elementary testing bound that the sum of the two errors is at least one minus total variation yields
\[
\mathbb E_b\mathbb E_b^{\mathbb P}
\bigl[d_{\mathrm H}(\widehat b,b)\bigr]
\geq
\frac{m}{2}
\left(
1-\max_{b,j}d_{\mathrm{TV}}(\mathbb P_b,\mathbb P_{b^{(j)}})
\right).
\]
By Pinsker's inequality and \eqref{eq:hidden-sign-neighbor-transcript-kl}, 
\[
d_{\mathrm{TV}}(\mathbb P_b,\mathbb P_{b^{(j)}})
\leq
\sqrt{\frac{8N\gamma^2}{K}}.
\]
If $N\leq K/(128\gamma^2)$, this distance is at most $1/4$.  Since $m=K/2$, \eqref{eq:error-dominates-hamming} and the preceding display then imply
\begin{equation}
\label{eq:assouad-error-lower}
\mathbb E_b\mathbb E_b^{\mathbb P}
\bigl[e_b(\widehat\alpha)\bigr]
\geq
\frac{3}{16}.
\end{equation}

On the other hand, \eqref{eq:hidden-sign-value-ratio} shows that whenever the $(c,\delta)$-PAC guarantee succeeds,
\[
e_b(\widehat\alpha)
\leq
\frac{(1-c)(3+\gamma)}{2\gamma}
=
\frac{3+\gamma}{32}
\leq
\frac{7}{64}.
\]
Since $0\leq e_b(\widehat\alpha)\leq1$ and $\delta\leq1/64$, it follows for every $b$ that
\[
\mathbb E_b^{\mathbb P}
\bigl[e_b(\widehat\alpha)\bigr]
\leq
\frac{7}{64}+\delta
\leq
\frac{1}{8}.
\]
This contradicts \eqref{eq:assouad-error-lower}.  Hence
\begin{equation}
\label{eq:batch-support-lower}
N
>
\frac{K}{128\gamma^2}
=
\Omega\!\left(\frac{K}{(1-c)^2}\right).
\end{equation}

\noindent\textbf{Step 3: Obtain the confidence term.}

Fix any $b$ and consider the binary testing problem
\[
H_b:T\sim\mathbb P_b
\qquad\text{versus}\qquad
H_{-b}:T\sim\mathbb P_{-b}.
\]
The posterior-optimal action at signal $(j,x)$ is $b_jx$ under $q_b$ and $-b_jx$ under $q_{-b}$.  Since the action set is $\{-1,+1\}$, every policy $h$ therefore satisfies, signal by signal,
\[
\mathbf 1\{h(j,x)\neq b_jx\}
+
\mathbf 1\{h(j,x)\neq-b_jx\}
=1.
\]
Averaging this identity over the $K$ signals gives
\begin{equation}
\label{eq:antipodal-error-sum}
e_b(h)+e_{-b}(h)=1.
\end{equation}

Given a realized transcript $T$, let $h_T$ be the fixed policy produced by the training rule.  Define the test
\[
\phi(T)
:=
\begin{cases}
b, & \text{if }e_b(h_T)\leq e_{-b}(h_T),\\
-b, & \text{otherwise}.
\end{cases}
\]
This is a valid information-theoretic test because the two candidate vectors are fixed and $h_T$ is a total policy on the finite signal set.  By \eqref{eq:hidden-sign-value-ratio}, a policy satisfying the multiplicative guarantee under $q_b$ obeys
\[
e_b(h_T)
\leq
\frac{(1-c)(3+\gamma)}{2\gamma}
=
\frac{3+\gamma}{32}
\leq
\frac{7}{64}
<
\frac12.
\]
Equation~\eqref{eq:antipodal-error-sum} then implies $e_{-b}(h_T)>1/2$, so the test selects $b$.  Hence the uniform $(c,\delta)$-PAC guarantee gives
\[
\mathbb P_b\!\left[\phi(T)=-b\right]
\leq
\mathbb P_b\!\left[
\operatorname{VAL}_{h_T}(q_b)
<c\,\operatorname{OPT}(q_b)
\right]
\leq\delta.
\]
The same argument with $b$ and $-b$ interchanged gives
\[
\mathbb P_{-b}\!\left[\phi(T)=b\right]
\leq\delta.
\]

We next bound the divergence between the two transcript distributions.  For either fixed state $\theta$, the two signal probabilities are interchanged within every pair $(j,-1),(j,+1)$.  The contribution of each pair to relative entropy is the expression in \eqref{eq:hidden-sign-one-sample-kl}; summing over the $m=K/2$ coordinates gives
\[
\begin{aligned}
D_{\mathrm{KL}}\!\left(
q_b(\cdot\mid\theta)
\,\middle\|\,
q_{-b}(\cdot\mid\theta)
\right)
&=
\sum_{j=1}^m
\frac{2\gamma}{K}
\log\frac{1+\gamma}{1-\gamma}\\
&=
\gamma\log\frac{1+\gamma}{1-\gamma}.
\end{aligned}
\]
There are $N$ independent observations from each of the two states, while the auxiliary randomness again contributes zero relative entropy.  Consequently,
\[
\begin{aligned}
D_{\mathrm{KL}}(\mathbb P_b\|\mathbb P_{-b})
&=
\sum_{\theta\in\{-1,+1\}}
N D_{\mathrm{KL}}\!\left(
q_b(\cdot\mid\theta)
\,\middle\|\,
q_{-b}(\cdot\mid\theta)
\right)\\
&=
2N\gamma\log\frac{1+\gamma}{1-\gamma}
\leq
8N\gamma^2,
\end{aligned}
\]
where the inequality again uses $\log((1+\gamma)/(1-\gamma))\leq4\gamma$.  The Bretagnolle--Huber testing inequality \citep{BretagnolleHuber1979}, applied to the test $\phi$, now gives
\[
2\delta
\geq
\mathbb P_b\!\left[\phi(T)=-b\right]
+
\mathbb P_{-b}\!\left[\phi(T)=b\right]
\geq
\frac12
\exp\!\left(-D_{\mathrm{KL}}(\mathbb P_b\|\mathbb P_{-b})\right)
\geq
\frac12e^{-8N\gamma^2}.
\]
Equivalently, $e^{-8N\gamma^2}\leq4\delta$, and hence
\begin{equation}
\label{eq:batch-confidence-lower}
N
\geq
\frac{1}{8\gamma^2}
\log\frac{1}{4\delta}
=
\frac{1}{2048(1-c)^2}
\log\frac{1}{4\delta}
=
\Omega\!\left(
\frac{\log(1/\delta)}{(1-c)^2}
\right).
\end{equation}
The final comparison uses $\gamma=16(1-c)$ and $\delta\leq1/64$, which implies $\log(1/(4\delta))\geq(2/3)\log(1/\delta)$.
Taking the maximum of \eqref{eq:batch-support-lower} and \eqref{eq:batch-confidence-lower}, and using $\max\{x,y\}\geq(x+y)/2$, proves \eqref{eq:batch-sample-lower} after adjusting the universal constant $C_0$.
\end{proof}

\section{Preliminaries}
\label{appendix:computational-preliminaries}

This section gives an informal, self-contained guide to the computational concepts used in the paper.

\subsection{Circuits and representations}
\label{subsec:circuits-and-representations}

A \emph{Boolean circuit} is a finite directed acyclic network of elementary logical operations. Its input wires carry bits, each internal node (or \emph{gate}) applies a fixed Boolean operation such as \textsc{And}, \textsc{Or}, or \textsc{Not} to the bits on its incoming wires, and designated output wires return one or more bits. Acyclicity allows the gates to be evaluated in an order in which every input to a gate has already been computed. Other standard finite, functionally complete collections of gates are equivalent for our purposes, because one such gate basis can be translated into another with at most polynomial overhead.

A \emph{randomized Boolean circuit} has, in addition to its ordinary input wires, random input wires containing independent fair bits. Conditional on the ordinary input and on a realization of these random bits, the circuit is deterministic. In our model the ordinary input is a fixed binary encoding of the state $\theta$, the random inputs form the seed $R$, and the output bits encode the signal $s=q(\theta;R)$. If the circuit uses $r_q$ random bits, then
\[
q(s\mid\theta)
=
2^{-r_q}\bigl|\{R\in\{0,1\}^{r_q}:q(\theta;R)=s\}\bigr|.
\]
Thus, one forward draw is easy to generate by choosing $R$ and evaluating the circuit, even though computing the likelihood of an observed signal may require aggregating over exponentially many seeds. This difference between forward simulation and likelihood evaluation is central to the paper.

\paragraph{Gate-list encoding.}
Every acyclic circuit can be numbered in topological order. A standard gate-list description records the numbers and roles of the ordinary inputs, random inputs, and outputs and, for each gate, its type and the indices of the earlier wires that feed it. The resulting binary string is denoted by $\langle q\rangle$. With a bounded-fan-in gate basis, each predecessor index requires only logarithmically many bits, so the gate-list length is polynomially related to the number of gates and wires. The represented circuit can therefore be evaluated in time polynomial in $|\langle q\rangle|$. The precise reasonable choice of gate basis, numbering convention, or gate-list format does not affect any polynomial-time classification, since standard choices can be translated into one another in polynomial time.

\paragraph{Generality of the representation.}
Circuit representation is not tied to a particular economic or statistical mechanism. States, records, messages, and actions can all be encoded as bit strings, and circuits can compose arbitrary finite logical and arithmetic subroutines on those encodings. More generally, at any fixed input length, a deterministic machine or algorithm that runs for at most a bounded number of steps can be unrolled into a Boolean circuit whose size is polynomial in that running time; a randomized algorithm is represented in the same way by exposing its random tape as additional input wires. Conversely, an explicitly given circuit can be evaluated efficiently from its gate list. Hence, circuits represent general finite, bounded-time processes, machines, and algorithms up to polynomial overhead, which is the level of generality relevant to the complexity classes used below. 

\subsection{Complexity classes}
\label{subsec:complexity-classes}

Complexity classes compare how the resources needed to solve a computational yes-or-no problem grow with the bit length $n$ of its input. Here \emph{polynomial time} means that the number of elementary computational steps is bounded by $Cn^k$ for some constants $C$ and $k$ independent of the instance. 

\paragraph{Deterministic and verifiable computation.}
The class $\mathsf{P}$ consists of problems for which a deterministic algorithm can decide the correct answer in polynomial time. The class $\mathsf{NP}$ consists of problems for which every yes-instance has a certificate of polynomial length whose validity can be checked deterministically in polynomial time. Boolean satisfiability (\textsc{SAT})--deciding whether a Boolean formula has at least one satisfying assignment--is a canonical $\mathsf{NP}$-complete problem.

\paragraph{Randomized computation.}
The class $\mathsf{RP}$ consists of problems admitting a randomized polynomial-time algorithm with one-sided error: it never accepts a no-instance, and it accepts every yes-instance with probability at least $1/2$. The particular fixed positive constant can be changed by independent repetition. The class $\mathsf{BPP}$ allows two-sided error: a randomized polynomial-time algorithm gives the correct answer with probability at least $2/3$ on every instance. Again, any fixed success probability strictly above $1/2$ gives the same class because repetition can amplify the success probability.

The class $\mathsf{PP}$ also uses randomized polynomial time, but it requires only a strict majority of accepting random tapes: a yes-instance is accepted with probability greater than $1/2$, whereas a no-instance is accepted with probability at most $1/2$. In \textsc{Majsat}, one asks whether a Boolean formula is satisfied by strictly more than half of all assignments; this is a canonical $\mathsf{PP}$-complete problem \citep{Gill1977}.

\paragraph{Statistical zero knowledge.}
The class $\mathsf{SZK}$ consists of problems having interactive proofs in which a computationally unbounded prover can convince a randomized polynomial-time verifier of a yes-instance, a cheating prover cannot convincingly establish a no-instance except with bounded probability, and the interaction reveals essentially no additional information: on yes-instances, the verifier's view can be efficiently simulated without the prover up to negligible statistical distance. A canonical $\mathsf{SZK}$-complete problem is Statistical Difference \citep{SahaiVadhan2003}. Given descriptions of two sampling circuits, it asks whether their output distributions are far apart or close in total variation distance.

\paragraph{Containments and standard conjectures.}
The unconditional containments relevant to this paper include
\[
\mathsf{P}\subseteq\mathsf{RP}\subseteq\mathsf{BPP}\subseteq\mathsf{PP},
\qquad
\mathsf{RP}\subseteq\mathsf{NP}\subseteq\mathsf{PP},
\qquad
\mathsf{BPP}\subseteq\mathsf{SZK}.
\]
It is widely conjectured that $\mathsf{P}\neq\mathsf{NP}$. Since $\mathsf{P}\subseteq\mathsf{NP}\subseteq\mathsf{PP}$, that conjecture implies $\mathsf{P}\neq\mathsf{PP}$, so the latter is a logically weaker separation assumption. It is also widely conjectured that randomization does not enlarge polynomial-time computation, namely $\mathsf{P}=\mathsf{BPP}$; this would imply $\mathsf{RP}=\mathsf{P}$ and, together with $\mathsf{P}\neq\mathsf{NP}$, would imply $\mathsf{RP}\neq\mathsf{NP}$. Finally, the paper assumes $\mathsf{BPP}\neq\mathsf{SZK}$ for its ex ante lower bounds. Because $\mathsf{BPP}\subseteq\mathsf{SZK}$, this assumption says that the containment is strict; as noted in the main text, it is a stronger assumption than $\mathsf{P}\neq\mathsf{NP}$ \citep{SahaiVadhan2003}. 
\end{document}